\long\def\@savemarbox#1#2{\global\setbox#1\vtop{\hsize\marginparwidth 
  \@parboxrestore\tiny\raggedright #2}}
\renewcommand*{\backref}[1]{}
\renewcommand*{\backrefalt}[4]{
  \ifcase #1
  [No citations.]
  \or [#2]
  \else [#2]
  \fi }
   \def\MR#1{}
\numberwithin{equation}{section}
\theoremstyle{plain}
\newtheorem{theorem}[equation]{Theorem}
\newtheorem{conjecture}[equation]{Conjecture}
\newtheorem{lemma}[equation]{Lemma}
\newtheorem*{namedtheorem}{\theoremname}
\newcommand{\theoremname}{testing}
\newtheorem{definition}[equation]{Definition}
\newcommand{\R}{{\mathbb{R}}}
\newcommand{\Z}{{\mathbb{Z}}}
\newcommand{\N}{{\mathbb{N}}}
\newcommand{\C}{{\mathbb{C}}}
\newcommand{\Q}{{\mathbb{Q}}}
\newcommand{\tri}{T}%triangulation
\newcommand{\M}{M}%manifold
\newcommand{\TV}{\mathrm{TV}}%Turaev-Viro
\newcommand{\adm}{\mathrm{Adm}}%admissible colorings
\newcommand{\SL}{\mathrm{sl}}
\newcommand{\co}{\colon\thinspace}
\newcommand{\ie}{\emph{i.e.}}
\newcommand{\eg}{\emph{e.g.}}
\newcommand{\Pk}{kP}
\newcommand{\PT}{P_{\tri}}
\newcommand{\polydeg}{\mathrm{deg}}
\newcommand{\vol}{\mathrm{vol}}
\begin{document}

\title[Computation of Large Asymptotics of Quantum Invariants]{\Large Computation of Large Asymptotics of 3-Manifold Quantum Invariants}
\author{Cl\'ement Maria \and Owen Rouill\'e}
\thanks{INRIA Sophia Antipolis-M\'editerran\'ee, \url{clement.maria@inria.fr}, \url{owen.rouille@inria.fr} } 

\date{}

\begin{abstract} \small\baselineskip=9pt 
Quantum topological invariants have played an important role in computational topology, and they are at the heart of major modern mathematical conjectures. In this article, we study the experimental problem of computing large $r$ values of Turaev-Viro invariants $\TV_r$. We base our approach on an optimized backtracking algorithm, consisting of enumerating combinatorial data on a triangulation of a 3-manifold. We design an easily computable parameter to estimate the complexity of the enumeration space, based on lattice point counting in polytopes, and show experimentally its accuracy. We apply this parameter to a preprocessing strategy on the triangulation, and combine it with multi-precision arithmetics in order to compute the Turaev-Viro invariants. We finally study the improvements brought by these optimizations compared to state-of-the-art implementations, and verify experimentally Chen and Yang's \emph{volume conjecture} on a census of closed 3-manifolds.
\end{abstract}

\maketitle

\section{Introduction}
\label{sec:introduction}

A main objective of low-dimensional topology is to study and classify 3-manifolds up to homeomorphisms. To do so, topologists have designed a variety of \emph{topological invariants}, \ie, properties allowing to distinguish between non-homeomorphic manifolds. 

A remarkable family of topological invariants are the quantum invariants of Turaev-Viro~\cite{turaev92-invariants}. They consist of an infinite family of real-valued quantities $(\TV_r)_r$, indexed by an integer $r \geq 3$, defined on a triangulation of a manifold. Each invariant $\TV_r$ is made of an exponentially large sum of weights associated to combinatorial data attached to the triangulation.

These invariants have played a key role in computational topology. On the one hand, their combinatorial nature has allowed the design of algorithms and the study of their computational complexity. Defined as a sums over combinatorial states, they can be computed with a simple backtracking algorithm of exponential complexity. They also admit a fixed parameter tractable algorithm, relying on dynamic programming and parameterized by the tree-width of the dual graph of the triangulation~\cite{DBLP:journals/jact/BurtonMS18}. Algorithms can be further improved by pruning the search space for enumeration~\cite{DBLP:conf/esa/MariaS16}. These algorithms are implemented in the major topology software packages \emph{Regina}~\cite{Burton12CompTopWRegina,regina} and the \emph{Manifold Recogniser}~\cite{matveev03-algms,recogniser}, and permit efficient computation of invariants $\TV_r$ for small values of $r$ in practice. These invariants remain however extremely hard to compute in general, and their computation belongs to the complexity class \#P-hard~\cite{DBLP:journals/jact/BurtonMS18,kirby04-nphard}, which most likely bonds any algorithm to rely on the enumeration of exponentially large sets of states. Most notably, computing $\TV_r$ for large values of $r$ is a theoretical and practical challenge, even on small input triangulations.

On the other hand, these topological invariants are remarkably efficient to distinguish between non-equivalent manifolds in practice~\cite{DBLP:conf/esa/MariaS16,matveev03-algms}. As a consequence, their efficient implementations have played a fundamental role in the composition of census databases of 3-manifolds (by work of Burton, Matveev, Martelli, Petronio), which are analogous to the well-known dictionaries of knots \cite{burton07-nor7,matveev03-algms}. These databases are of importance for mathematicians, \eg, to verify conjectures on wide sets of manifolds.

Turaev-Viro invariants are also central in mathematics. They are notably at the heart of Chen and Yang's \emph{volume conjecture}~\cite{ChenY2018,DetcherryKY2018}, the 3-manifold counterpart of the famous \emph{volume conjecture} for knots~\cite{Kashaev97,Murakami01}. Chen and Yang's conjecture states that the asymptotic behavior of the sequence $(\TV_r)_{r \geq 3}$ for a fixed manifold $\M$ is connected to the \emph{simplicial volume} of $\M$, a quantity of distinctively different nature. The volume conjectures in low dimensional topology have attracted a lot of effort from the topology community. They have been partially proved on restricted families of manifolds and knots~\cite{ChenY2018,detcherry2017gromov}, and verified experimentally on few spaces with very specific structure~\cite{ChenY2018}; they remain some of the major open conjectures in low dimensional topology.

\medskip

\paragraph{Contributions.} In this article, we take a different turn on the computation of Turaev-Viro invariants, and compute large $r$ values of $\TV_r$ on large sets of triangulated 3-manifolds. This computation being extremely challenging, we focus our attention to small triangulations of closed 3-manifolds ($n \leq 9$ tetrahedra), and the main dependence in complexity is in $r$, which contrasts with the usual computation where $r$ is small. Under this model, we study experimentally the existing algorithms---backtrack enumeration and fixed parameter tractable algorithm---and conclude that the backtracking algorithm is more appropriate for large $r$ computations of $\TV_r$ in practice (Section~\ref{sec:backtrackvsfpt}). We then resort to optimizing computation with the backtracking method. To do so, we introduce techniques and easily computable parameters to estimate the complexity of the computation of $\TV_r$, for all $r$, on a triangulation $\tri$, using the backtracking algorithm (Section~\ref{sec:ehrhart}). This in turn permits efficient preprocessing of the triangulation---cherry-picking the best triangulation for computation (Section~\ref{sec:expcolor})---, and the fine tuning of arithmetic precision for correct multi-precision computation (Section~\ref{sec:multiprec}). These optimizations are validated by various experimental studies. 
Finally, using these optimizations, we compute Turaev-Viro invariants on large sets of low complexity 3-manifolds, and for large $r$ values. These computations on the first manifolds of low complexity allow us to provide further experimental evidences to the volume conjecture for closed 3-manifolds (graph manifolds and hyperbolic manifolds), and study the speed of convergence of the sequence $(\TV_r)_r$ (Section~\ref{sec:convergence}).

\section{Background}
\label{sec:background}

\paragraph{Manifolds and generalized triangulations.} Unless mentioned otherwise, we consider in this article closed oriented 3-manifolds, \ie, compact oriented manifolds locally homeomorphic to $\R^3$, represented by \emph{generalized triangulations}. A generalized triangulation $\tri$ of a closed $3$-manifold $\M$ is a collection of $n$ abstract tetrahedra $T = \{ \Delta_1, \ldots, \Delta_n \}$ together with $2n$ {\em gluing maps} identifying their $4n$ triangular faces in pairs, such that the underlying topological space is homeomorphic to $\M$. Generalized triangulations are a generalization of simplicial complexes, and they can encode compactly a wide range of $3$-manifolds. For example, there are $13,400$ \emph{topologically distinct}, prime closed oriented 3-manifolds that can be triangulated using at most $11$ tetrahedra, 

We denote by $v$, $e$, $f$, and $n$ the number of vertices, edges, triangles, and tetrahedra of a triangulation $\tri$, after gluings of the tetrahedra. The number of tetrahedra $n$ of $\tri$ is the {\em size} of the triangulation. By standard topological arguments, involving Euler characteristic and Poincar\'e duality, a triangulation of a closed 3-manifold satisfies: 
\begin{equation}
\label{eqn:euler}
e=n+v
\end{equation}

Note that due to the flexibility of the gluings of tetrahedra, we can construct $n$-tetrahedra triangulations with a single vertex, \ie, where all $4n$ vertices of the $\Delta_i$, $ 1 \leq i \leq n$, are identified to one single point. These \emph{$1$-vertex triangulations} are particularly useful for computation. 

We refer the reader to~\cite{jaco03-0-efficiency} for more details on low dimensional topology and generalized triangulations. 

\medskip 

\paragraph{Turaev-Viro invariants.} 
Turaev-Viro type invariants~\cite{turaev92-invariants} of a triangulation $\tri$ are defined as exponentially large sums of weights associated to \emph{admissible colorings} of the edges of $T$.

Let $\tri$ be a generalized triangulation of a closed $3$-manifold $\M$,
let $r \geq 3$ be an integer, and let $I = \{0, 1/2, 1, 3/2, \ldots, (r-2)/2\}$ be the set of the first $r-1$ positive half-integers.

Let $E$ be the set of edges of $\tri$. A \emph{coloring} of $\tri$ is defined to be a map $\theta\co E \to I$ from the edges of $\tri$ to $I$. A coloring $\theta$ is \emph{admissible} if, for each triangle of $\tri$, the three edges $e_1$, $e_2$, and $e_3$ bounding the triangle satisfy the following constraints: 
\begin{equation}\label{eq:cstr1}
	\text{parity condition:} \ \ \theta(e_1)+\theta(e_2)+\theta(e_3) \in \Z;
\end{equation}
\begin{equation}\label{eq:cstr2}
\begin{array}{l}
	\text{triangle inequalities:} \ \ \theta(e_1) \leq \theta(e_2) + \theta(e_3),\\
    \theta(e_2) \leq \theta(e_1) + \theta(e_3), \ \text{and} \   
    \theta(e_3) \leq \theta(e_1) + \theta(e_2); \\
\end{array} 
\end{equation}
\begin{equation}\label{eq:cstr3}
\text{upper bound:} \ \ \theta(e_1)+\theta(e_2)+\theta(e_3)\leq 
    r-2.
\end{equation}
The set of admissible colorings of $\tri$ is denoted by $\adm(\tri,r)$.

Note that in generalized triangulations, a triangle may be defined by less than three different edges, in which case the constraints need to be slightly adapted. This is a simple technicality we do not mention in the following.

Given an admissible coloring $\theta$ of $\tri$, a \emph{weight} $|x|_\theta$ of a face (vertex, edge, triangle, or tetrahedron) of $\tri$ is a complex number that depends exclusively on the colorings $\theta(e)$ of the edges $e$ incident to $x$. 

Given an integer $r$ and a weight system, the Turaev-Viro invariant $\TV_r(\tri)$ associated to a triangulation is defined by:
\begin{equation}\label{eq:tv}
\sum_{\theta \in \adm(\tri,r)} \ \ \prod_{x \text{ face of } \tri} |x|_\theta.
\end{equation}

In the definition of their invariant, Turaev and Viro describe sufficient conditions for the weight system to construct a topological invariant with the above formula, \ie, $\TV(\tri) = \TV(\tri')$ for any two triangulations $\tri$ and $\tri'$ of the same manifold $\M$. They also describe a specific weight system associated to the Lie algebra $\SL_2(\C)$, and depending of an extra parameter $q \in \{1,2\}$, leading to interesting topological invariants. We refer to~\cite{turaev92-invariants} for exact definitions of these ; we use them in our computations for $q=2$, which corresponds to the Chen-Yang volume conjecture verified in Section~\ref{sec:convergence}. 

Note that this article focuses on the efficient enumeration of the set of admissible colorings $\adm(\tri,r)$ of a triangulation, and is consequently mostly generalizable to any weight system. 

\medskip

\begin{table}[t]
\centering
\begin{tabular}{|l|c|c|c|c|c|c|c|c|}
\hline
\#tetra.     & 1 & 2 & 3 & 4 & 5 & 6 & 7 & 8 \\
\hline
\#Mani.      & 3 & 7 & 7 & 14 & 31 & 74 & 175 & 436 \\
\hline
\#$H_1=0$    & 2 & 3 & 3 & 8 & 15 & 33 & 78 & 193 \\
\hline
\end{tabular}
\caption{Number of topologically distinct, prime, closed, orientable 3-manifolds (``\#Mani.'') with a minimal triangulation of size $n$, $1 \leq n \leq 8$, (``\#tetra.'') and those among them with trivial first homology group $H_1(\M,\Z/2\Z)=0$ (``\#$H_1=0$'').}
\label{fig:tabletopologies}
\end{table}

\paragraph{Computation, data, and their representation.} The computation of Turaev-Viro invariants belongs to the complexity class \#P-hard, which makes them extremely challenging to compute for large triangulations or large values $r$. The state-of-the-art algorithms working for any $r \geq 3$ all have worst case complexity of the form $r^{\Theta(n)}$, for $n$ tetrahedra triangulations (more details in Section~\ref{sec:backtrackvsfpt}). 

In this article, we study the computation and values of invariants $\TV_r$ for large values of $r$ (up to $100$). This bonds our computations to \emph{small} triangulations (mostly less than $6$ tetrahedra, and up to $9$ tetrahedra for the most challenging experiments) in order to be feasible. Note however that generalized triangulations can represent a vast variety of distinct topologies with very few tetrahedra. Table~\ref{fig:tabletopologies} summarizes the number of distinct, prime, oriented, closed $3$-manifolds admitting a triangulation of minimal size $n$, for $n \leq 8$. In consequence, the experiments in this article concern tens of topologically distinct topologies, with no topological restriction other than admitting a small triangulation. In comparison, the volume conjecture for closed manifolds, verified experimentally in Section~\ref{sec:convergence} as application of our work, has only been verified experimentally on a few triangulations of closed 3-manifolds~\cite{ChenY2018} that admit the very specific topological property of being constructible by surgeries along two simple knots ($K_{4_1}$ and $K_{5_2}$), which allows for faster computation. The methods in this article improve computation of $\TV_r$ for any triangulated 3-manifold.

\medskip 

In order to present statistics over censuses of triangulations, box and whisker plots are used. For all of them, the line inside the box represents the median of the sample and the box the first and last quartile. The whiskers cover the rest of the sample until 1.5 times the size of the box, every point outside this reach is added to the graph. 

\medskip

All programs are in {\tt C++}, compiled with {\tt gcc 9.3.1}, and run on a Linux machine with 2.40GHz processors and 128GB RAM.

\section{Backtracking vs Parameterized Algorithms}
\label{sec:backtrackvsfpt}

\paragraph{Existing algorithms for Turaev-Viro invariants.} 

In this section, we compare the performance of existing methods to compute $\TV_r$ for large $r$. We start by reviewing algorithms for Turaev-Viro invariants. 

\medskip

\begin{table}[ht]
\addtolength{\tabcolsep}{-1pt}
\centering  
\begin{tabular}{|l|c|c|c|c|c|c|c|c|}
\hline
\#tetra.     & 1 & 2 & 3 & 4 & 5 & 6 & 7 & 8 \\
\hline
\#Mani.      & 3 & 7 & 7 & 14 & 31 & 74 & 175 & 436 \\
\hline
\%faster & 100 & 100 & 100 & 79 & 52 & 34 & 19 & 9 \\
\hline
\end{tabular}
\addtolength{\tabcolsep}{1pt}  
\caption{For fix number of tetrahedra $n \leq 8$ (``\#tetra.''), \% of triangulations in the census for which the backtracking algorithm is faster than the FPT algorithm, when computing $\TV_r$ at $r=11$ with single precision arithmetic.}
\label{fig:tablefaster}
\end{table}

Let $\tri$ be a triangulation of a 3-manifold, and $r \geq 3$ be an integer. There is a straightforward backtracking algorithm to enumerate all admissible colorings $\adm(\tri,r)$ of order $r$, and compute $\TV_r(\tri)$. Let $v$ be the number of vertices and $n$ the number of tetrahedra in $\tri$. Consequently, $\tri$ admits $n+v$ edges (Eq.(~\ref{eqn:euler})). Order the edges $e_1, \ldots, e_{n+v}$ of $\tri$. There are $(r-1)$ possible colors for an edge, consequently the backtracking algorithm consists of traversing a regular $(r-1)$-tree of depth $n+v+1$, where nodes at depth $i$, $1 \leq i \leq n+v$, correspond to the edge $e_i$, and any of the $r-1$ downward facing arcs at such node corresponds to the assignment of a color $\{0, 1/2, \ldots , (r-2)/2 \}$ to $e_i$ in some coloring. In consequence, leaves of the tree are in bijection with colorings. The algorithm consists of traversing this tree in a depth-first fashion, stopping when any of the admissibility constraint (\ref{eq:cstr1}), (\ref{eq:cstr2}), or~(\ref{eq:cstr3}) is violated. If a leaf is reached, it corresponds to an admissible coloring $\theta$, and we keep track of the sum $\sum_{\theta \in \adm(\tri,r)} |\tri|_\theta$. Ordering the edges appropriately~\cite{DBLP:journals/jact/BurtonMS18}, this algorithm runs in worst case $O((r-1)^{n+v})$ operations (machine and arithmetic), storing a constant amount of data (pointers and weights\footnote{Note that depending on the arithmetic precision required, weights may be stored on large numbers of bits.}). However, in practice, only a small portion of the tree is explored, as admissibility constraints are violated early on ; see Section~\ref{sec:expcolor} for an experimental analysis. 

The fixed-parameter tractable (FPT) algorithm introduced in~\cite{DBLP:journals/jact/BurtonMS18} uses a standard dynamic approach to compile the sum (\ref{eq:tv}) defining $\TV_r(\tri)$, run on a combinatorial decomposition of the triangulation $\tri$. When $k$ is a parameter measuring the \emph{sparsity} of the triangulation\footnote{More precisely, $k$ is the \emph{treewidth} of the dual graph of $\tri$~\cite{DBLP:journals/jact/BurtonMS18}.}, with $1 \leq k \leq n-1$, it runs in $O(n \cdot (r-1)^{6(k+1)} k^2 \log(r) )$ operations (machine and arithmetic) when $6(k+1) \leq n+v$, or $O(n \cdot (r-1)^{n+v} (n+v)^2 \log(r) )$ operations otherwise. However, this algorithm has worst case exponential memory, and requires to store $O((r-1)^{6(k+1)})$ arithmetic values. 

As predicted by the complexity analysis of these algorithms, the FPT algorithm shows better performance than the backtracking algorithm on most input triangulations with $n \leq 20$ tetrahedra, for very small values of $r$ ($r \leq 7$)~\cite{DBLP:journals/jact/BurtonMS18}. We study in the following the case of interest for this article, i.e., smaller triangulations $n \leq 9$ and larger values of $r$, and draw different conclusions.

\medskip

\begin{figure}[ht]
\includegraphics[width=0.7\columnwidth]{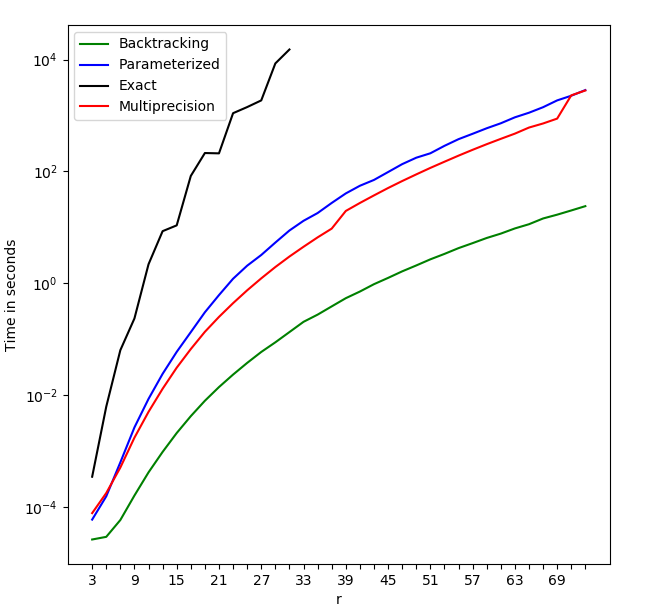}

\caption{Time performance for computing $TV_r$, $3 \leq r \leq 73$, on a 5-tetrahedra, treewidth $4$ minimal triangulation of the manifold $\mathtt{SFS[S^2:(2,1)(3,1)(5,-4)]}$, using the backtracking algorithm with single precision arithmetics (``Backtracking''), with multi-precision arithmetics (``Multiprecision''), and with exact arithmetic in a cyclotomic field (``Exact''), and the FPT algorithm with single precision arithmetic (``Parameterized'').}
\label{fig:pathos}

\end{figure}

\paragraph{Performance of algorithms for large $r$.}

Figure~\ref{fig:pathos} presents the performance of the backtracking and parameterized algorithms on a 5-tetrahedra, treewidth $4$ triangulation of a 3-manifold, with different arithmetic precision. In this analysis, we focus on the single precision backtracking (``Backtracking'') and parameterized (``Parameterized'') algorithms. The computation are run with the {\tt C++} library {\tt Regina}~\cite{regina}. The backtracking algorithm clearly performs orders of magnitude better than the parameterized algorithm, and the difference of performance to compute $\TV_r$ increases with larger values of $r$.

In fact, even if the parameterized algorithm performs better than the backtracking algorithm in \emph{larger} triangulations, as observed in~\cite{DBLP:journals/jact/BurtonMS18}, it is not the case for small triangulation. Table~\ref{fig:tablefaster} pictures the fraction of triangulations of the census for which the backtracking algorithm performs better for the computation of $\TV_r$, with $r=11$. For triangulations with at most $5$ tetrahedra, the backtracking algorithm performs better in the majority of cases, and still remains a better option in a substantial fraction of triangulations with up to $8$ tetrahedra. 

This interpretation is twofold. First, the exponent $6(k+1)$ in the complexity of the parameterized algorithm, $k \geq 1$, is likely to be larger than the exponent $(n+v)$ in the complexity of the backtracking algorithm, when $n$ is small. Second, the backtracking algorithm explores a very small portion of the search tree when enumerating colorings.

Additionally, a main difficulty of the parameterized algorithm is its memory consumption. Indeed, Figure~\ref{fig:pathos} stops at $r=73$ because it is the last order $r$ before which the parameterized algorithm requires more than the $128$GB of memory available in the machine, with single precision numbers (\ie, not taking into account the substantial space required by high precision arithmetic, which is necessary to get a correct result, due to the nature of the computation). 
Recall that the FPT algorithm has exponential memory complexity, which makes it rapidly intractable for larger triangulations and larger values of $r$. In comparison, the backtrack search uses a space linear in the size of the input times the size of the numbers used in the computations.

In conclusion, the backtracking algorithm is a better choice for computing large $r$ asymptotics of the Turaev-Viro invariants on small triangulations. In the following sections, we describe optimizations to and implementation of the backtracking algorithm.

\section{Ehrhart theory for counting admissible colorings}
\label{sec:ehrhart}
In this section, we introduce Ehrhart theory and use it to estimate the number of valid colorings for a subset of the manifold census.

\subsection{Ehrhart theory and counting lattice points in polytopes}

We introduce Ehrhart theory~\cite{EHRH62} on counting lattice points in polytopes.

\begin{definition}
A \emph{polytope} $P$ is a bounded subset of $\R^d$ defined as the set of solutions of a system of linear inequalities:
\[
  P = \{ x | Ax\leq b \}, \text{with} \ \ A\in \R^{m \times d},\ b\in \R^m.
\]
The polytope $P$ is \emph{rational} if $A \in \Q^{m \times d}$ and $b \in \Q^m$.

The \emph{dilatation} of a polytope $P$ by a factor $k\in \N$, denoted by $\Pk$, is the polytope obtained by multiplying $b$ by $k$:

\[
  \Pk = \{ x | Ax\leq kb \}.
\]
It is equivalent to multiplying the coordinates of the vertices of $P$ by $k$.

\end{definition}

Ehrhart theorem states that the number of lattice points, \ie, points with integer coordinates, in the dilatation $\Pk$ of a rational polytope $P$, admits a closed form as a quasi-polynomial. Specifically,

\begin{theorem}[Ehrhart~\cite{EHRH62}]
\label{thm:ehr}
Let $P$ be a rational polytope of dimension $d$. There exists a finite family of periodic functions $\left(e_i \co \N \rightarrow \R \right)_{i=0 \ldots \polydeg}$ such that for all $k\in \N$:
\[
  \#(\Pk\cap\Z^d) = \sum_{i=0}^{\polydeg} k^i \cdot e_i(k).
\]
Where $\polydeg \in \N$. 

Let us denote $L(P,k) = \sum_{i=0\ldots \polydeg} k^i \cdot e_i(k)$. We refer to this function as {\em Ehrhart's polynomial}. It admits several remarkable properties:
\begin{itemize}
\item the period of the coefficients $e_i(\cdot)$ is the smallest integer $k$ such that $\Pk$ is a polytope with integer coordinates vertices;
\item the degree $\polydeg$ of $L(P,\cdot)$ is equal to the dimension $d$ of the polytope $P$;
\item the leading coefficient $e_d(\cdot)$ of $L(P,\cdot)$ is a constant function, whose value is equal to the d-dimensional volume of the polytope $P$.
\end{itemize}
 
\end{theorem}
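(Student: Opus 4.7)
The plan is to first reduce to the case of a rational simplex by triangulation, then attack the simplex case via a generating-function (Hilbert-series) argument on the cone over the simplex, and finally read off each of the three listed properties from the resulting rational function.

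First I would rationally triangulate $P$ into closed rational $d$-simplices $\sigma_1, \ldots, \sigma_s$ meeting along rational faces, and apply inclusion-exclusion over the faces: if each $L(\sigma_j, \cdot)$ and each $L(\tau, \cdot)$ (for lower-dimensional face intersections) is a quasi-polynomial, then so is $L(P, \cdot)$, since the class of quasi-polynomials of a given maximum period and maximum degree is closed under sums and differences. So the core task is the simplex case.

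For a rational $d$-simplex $\sigma$ with vertices $v_0, \ldots, v_d \in \Q^d$, choose a common denominator $N$ so that $N v_i \in \Z^d$. Form the cone $C \subset \R^{d+1}$ generated by the integer vectors $w_i = (N v_i, N)$, and observe that $\#(k\sigma \cap \Z^d)$ equals the number of lattice points in $C$ at height $kN$. Decompose $C$ as a disjoint union of translates of the half-open fundamental parallelepiped $\Pi = \{\sum_{i=0}^d \lambda_i w_i : 0 \le \lambda_i < 1\}$ by nonnegative integer combinations of the $w_i$. This yields the generating function
\[
\sum_{m \ge 0} \#(C \cap \Z^{d+1} \cap \{ \text{height} = m \})\, t^m \;=\; \frac{h(t)}{(1-t^N)^{d+1}},
\]
where $h(t)$ is a polynomial counting lattice points of $\Pi$ by height. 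Extracting the subseries at heights divisible by $N$ and performing a partial-fraction expansion of the right-hand side expresses the coefficient as a function of $k$ of the form $\sum_{i=0}^{d} k^i e_i(k)$, where each $e_i : \N \to \R$ is periodic with period dividing $N$. This gives existence of the quasi-polynomial $L(\sigma, k)$, and reassembling via inclusion-exclusion gives $L(P, k)$ with the same bound on the period.

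Finally I would read off the three properties. The degree and the leading coefficient come together: dividing by $k^d$, the sum $\#(kP \cap \Z^d)/k^d$ is a Riemann sum for $\int_{\R^d} \mathbb{1}_P \,dx = \vol(P)$ with mesh $1/k$, so $L(P,k)/k^d \to \vol(P)$ as $k \to \infty$, forcing the leading coefficient $e_d$ to be the constant $\vol(P)$ and, since $\vol(P) > 0$ for a $d$-dimensional polytope, forcing $\deg = d$. For the period, if the smallest $k$ making $kP$ integral is $k_0$, then one may carry out the above construction taking $N = k_0$, which shows every $e_i$ has period dividing $k_0$; conversely, if some $e_i$ were periodic with a strictly smaller period $k_1 \mid k_0$, the identity $L(P, k_1 m) = \#(k_1 m P \cap \Z^d)$ and an extremality argument on the lattice-point generating function of the non-integral vertices of $k_1 P$ show this is impossible.

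The main obstacle is the simplex case, specifically the bookkeeping that shows $h(t)/(1-t^N)^{d+1}$ expands into a quasi-polynomial whose period is exactly $k_0$ (not merely divides $N$); everything else (triangulation, closure of quasi-polynomials under sums, the volume asymptotics) is comparatively routine.
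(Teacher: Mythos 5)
The paper does not prove this statement at all---it quotes Ehrhart's theorem from the cited reference [EHRH62] as background---so there is no internal argument to compare against; your plan (triangulate into rational simplices, cone over each simplex, tile the cone by the half-open fundamental parallelepiped, expand $h(t)/(1-t^N)^{d+1}$ by partial fractions) is the standard textbook proof and is the right route. However, as written it has a genuine error in the core step. For the cone $C$ generated by $w_i=(Nv_i,N)$, the slice of $C$ at height $m$ is $m\sigma\times\{m\}$, so $\#(k\sigma\cap\Z^d)$ is the number of lattice points of $C$ at height $k$, \emph{not} at height $kN$. Consequently, ``extracting the subseries at heights divisible by $N$'' computes $\#(kN\sigma\cap\Z^d)$, i.e.\ the Ehrhart function of the \emph{integral} polytope $N\sigma$ along multiples of $N$ (an honest polynomial), and it discards exactly the residues of $k$ modulo $N$ in which the periodic coefficients live---which is the entire content of the rational case. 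The repair is standard: the coefficient of $t^k$ in $h(t)/(1-t^N)^{d+1}$ equals $\#(k\sigma\cap\Z^d)$ for \emph{every} $k$, and partial fractions at the $N$-th roots of unity (each a pole of order at most $d+1$) give $\sum_{i\le d}k^i e_i(k)$ with each $e_i$ of period dividing $N$; the same argument applies verbatim to the lower-dimensional simplices needed for your inclusion--exclusion.

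The second problem is the period claim you single out as ``the main obstacle'': proving that the minimal period equals $k_0$, the smallest dilation making $P$ integral, is not merely hard bookkeeping---it is false in general. Period collapse occurs: in dimension $\ge 2$ there are rational polytopes with non-integral vertices whose Ehrhart counting function has period strictly smaller than the denominator, even period $1$ (McAllister--Woods). All that is true, and all the construction yields, is that $k_0$ is \emph{a} period, i.e.\ the minimal period divides $k_0$; the corresponding bullet in the statement should be read in that weaker form, and your proposed ``extremality argument on the non-integral vertices'' cannot be carried out. The rest of your plan is sound: closure of quasi-polynomials under sums/differences for the triangulation step, and the Riemann-sum limit $L(P,k)/k^d\to\vol(P)$, which (since a periodic function with a limit is constant, and $\vol(P)>0$ for full-dimensional $P$) pins down both the degree $d$ and the constancy and value of $e_d$.
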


The quasi-polynomial $L(P, \cdot)$ of a rational polytope $P$ can be computed in time exponential in the embedding dimension:

\begin{theorem}[Barvinok~\cite{Barvinok94,Barvinok05}]
Let $\Delta$ be a $d$-dimensional simplex with integer coordinates vertices. Then $L(\Delta,\cdot)$ can be computed in time $(d\cdot|\Delta|^{O(d)})$ where $|\Delta|$ is the number of bits needed to encode $\Delta$. This result can be extended to rational polytopes.
\end{theorem}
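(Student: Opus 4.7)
The plan is to reduce the computation of $L(\Delta,\cdot)$ to evaluating a rational generating function associated with the lattice points of dilates of $\Delta$. For an integer polytope $P \subseteq \R^d$, I would consider the formal sum $f(P,z) = \sum_{x \in P \cap \Z^d} z^x$ in variables $z = (z_1,\ldots,z_d)$; as a Laurent polynomial it decomposes, by Brion's theorem, into a signed sum over the vertices $v$ of $P$ of the generating functions $f(K_v,z)$ of the supporting (tangent) cones $K_v$ at $v$. When $K_v$ is simplicial and \emph{unimodular} (generated by a $\Z$-basis of the lattice), $f(K_v,z)$ has the closed form $z^v / \prod_{i=1}^{d}(1 - z^{u_i})$, where the $u_i$ are the primitive generators of $K_v$. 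Outside this case the generating function is still rational but does not admit a short description a priori, so the whole algorithmic content is to reduce to the unimodular case.

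The central step, and the actual content of Barvinok's theorem, is the \emph{signed decomposition} of a rational simplicial cone into a short signed combination of unimodular cones. Given a simplicial cone of index $\mu$ (the absolute value of the determinant of its generators), I would apply LLL basis reduction to the fundamental parallelepiped of the generators to extract a sufficiently short lattice vector, then use it to write the cone as a $\pm$-combination of at most $d+1$ simplicial cones of index at most $\mu^{1-1/d}$. Iterating, one builds a recursion tree of depth $O(\log\log \mu)$ in fixed dimension, hence a total of $(\log \mu)^{O(d)} = |\Delta|^{O(d)}$ unimodular leaves, with polynomial work in $d$ and $|\Delta|$ at each node.

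Once $f(k\Delta, z)$ has been expressed as a short signed sum of unimodular rational terms (the vertices and edge-generators of $k\Delta$ scale linearly from those of $\Delta$, so the decomposition computed for $\Delta$ carries over parametrically in $k$), the quasi-polynomial $L(\Delta,k) = \#(k\Delta \cap \Z^d)$ is recovered by specialising $z \to 1$. Each individual term has a pole at $1$, so the limit is evaluated by the standard device of substituting $z = \exp(\tau \xi)$ for a generic $\xi \in \R^d$ and expanding in $\tau$ up to order $d$; the pole cancellations produce the coefficients $e_i(k)$ as explicit periodic functions of $k$, of total degree $d$ in $k$ as required by \refthm{ehr}. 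The extension from integer simplices to rational polytopes follows by triangulating and by clearing denominators on a common integer dilate, the scaling factor being absorbed into the $|\Delta|$ factor of the complexity. The hard part, and the single step controlling the overall bound, is the signed unimodular decomposition together with its LLL-based short-vector analysis; the remainder is algebraic bookkeeping on rational generating functions.
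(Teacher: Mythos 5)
The paper does not actually prove this statement: it is quoted as a black-box result, with the proof living in the two cited references of Barvinok. So there is no ``paper proof'' to compare against; what I can assess is whether your sketch faithfully reconstructs the cited argument. In outline it does: Brion's theorem reduces the lattice-point generating function of a polytope to the rational generating functions of its vertex tangent cones; Barvinok's LLL-based signed decomposition writes a simplicial cone of index $\mu$ as a signed combination of $O(d)$ cones of index at most $\mu^{(d-1)/d}$, so that a recursion of depth $O(\log\log\mu)$ in fixed dimension produces $(\log\mu)^{O(d)} = |\Delta|^{O(d)}$ unimodular cones; and the count is extracted by the substitution $z=\exp(\tau\xi)$ for generic $\xi$ and cancellation of poles. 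That is the standard route and the complexity bookkeeping you give is consistent with the stated bound.

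The one place your sketch is too thin for the statement as quoted is the passage from counting points of a single dilate to producing the full Ehrhart \emph{quasi-polynomial}, in particular for rational polytopes. For an integer simplex the tangent cones of $k\Delta$ are translates of those of $\Delta$ by the integer vectors $k v_i$, so the parametric dependence on $k$ is clean and $L(\Delta,\cdot)$ is a genuine polynomial; your argument is fine there. For a rational simplex, the apices $k v_i$ move relative to the lattice as $k$ varies, and the generating function of each shifted unimodular cone depends on the fractional parts of $k v_i$; ``clearing denominators on a common integer dilate'' only recovers the counting function along one residue class of $k$ modulo the period, not the periodic coefficients $e_i(\cdot)$ on all of $\N$. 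One must track this dependence explicitly, e.g.\ by computing the cone contributions separately for each residue class of $k$ modulo the common denominator, or by the parametric analysis of Barvinok's 2005 paper, which is precisely the second reference attached to the statement and is where the claimed extension to rational polytopes is actually established. With that step made explicit, your outline is an accurate reconstruction of the cited theorem.
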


\subsection{Application to Turaev-Viro invariants}
\label{subsec:appTV}

In this section, we adapt the definition of admissible colorings for Turaev-Viro invariants to Ehrhart theory's settings, and describe criteria for triangulations of a manifold $\M$ to admit a minimal number of colorings.

It was observed~\cite{DBLP:conf/esa/MariaS16} that for a 3-manifold $\M$ with trivial homology group $H_1(\M,\Z/2\Z)$, all admissible colorings $\adm(\tri,r)$ of any $1$-vertex triangulation $\tri$ of $\M$ and of any order $r \geq 3$ can only admit integer colors in $\{0, 1, 2, \ldots, \lfloor \frac{r-2}{2} \rfloor \}$. As a consequence, for such triangulations, the backtracking algorithm needs only explore a tree of much smaller size $O( \lfloor \frac{r}{2} \rfloor^{n+v} )$. Additionally, the parity constraints~(\ref{eq:cstr1}) for admissibility, which is the only non-linear constraint, is always satisfied with integer colors and can be omitted. In the following analysis, we exploit this fact to estimate the number of colorings in a triangulation, without enumerating them all, using Ehrhart theory.

As a consequence, we focus our attention to 3-manifolds $\M$ with trivial homology group $H_1(\M,\Z/2\Z)$ for the following theoretical analysis. Note that among the first smallest minimal triangulations of prime closed 3-manifolds, nearly half have trivial $\Z/2\Z$ homology; see Table~\ref{fig:tabletopologies}. We also assume triangulations to be $1$-vertex ; this is not a strong assumption as triangulations can be preprocessed efficiently to reduce the number of vertices to $1$ (see Section~\ref{sec:expcolor}).

\medskip

\paragraph{Analysis and experiments.} The following theoretical analysis follows the case $H_1 = 0$ and $1$-vertex triangulation, to compute an accurate estimator of the number of admissible colorings. In the more general cases (manifolds with non-trivial homology, more than $1$ vertex), the following analysis still provides a {\em lower bound} on the number of admissible colorings of a triangulation, and is still of interest. In consequence, in {\em all} following experiments---except Figure~\ref{fig:worstacc} that concerns the {\em exact} approximation of the number of admissible colorings and Figure~\ref{fig:estimbacktrack} that concerns performances of the backtracking algorithm under this hypothesis---the full census of manifolds is used, regardless of their homology. 

\medskip

Consider now the admissibility constraints~(\ref{eq:cstr2}) and~(\ref{eq:cstr3}), \ie, the constraints defining the admissible colorings minus the parity constraint. They are linear and consequently define a polytope. More specifically, let us consider the following: 

\begin{definition}
\label{def:canon}
Let $\tri$ be a triangulation, with $v$ vertices, $f$ triangles, $n$ tetrahedra, and $n+v$ edges. Consider the following $2(n+v)+4f$ inequalities:

\begin{itemize}
\item for every edge, $-\theta(e) \leq 0$, and $\theta(e) \leq 1/2$,
\item for every triangle with edges $e_1, e_2, e_3$, $\theta(e_1)-\theta(e_2)-\theta(e_3) \leq 0$, $\theta(e_2)-\theta(e_1)-\theta(e_3) \leq 0$, $\theta(e_3)-\theta(e_1)-\theta(e_2) \leq 0$, and $\theta(e_1)+\theta(e_2)+\theta(e_3) \leq 1$.
\end{itemize}

Let $A$ be the $(2(n+v)+4f, n+v)$-matrix with $\{-1,0,1\}$ coefficients, and $b$ the $(n+v)$-vector with $\{0, 1/2, 1\}$ coefficients, defining the rational polytope $\PT = \{x | Ax \leq b\}$ corresponding to this set of equations. We call $\PT$ the \emph{admissibility polytope} associated to $\tri$.
\end{definition}

\begin{lemma}
Let $r\geq 3$ be an integer, $\tri$ a $1$-vertex triangulation, with $n+1$ edges, of a manifold $\M$ with trivial $H_1(\M,\Z/2\Z)$, and $\PT$ the associated admissibility polytope. Then the admissible colorings of $\adm(\tri,r)$ are in bijection with the lattice points $((r-2)\PT \cap \Z^{n+v})$.
\end{lemma}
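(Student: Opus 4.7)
The plan is to pass through two observations and then unfold the definitions. The first observation, which I would invoke by citing the result of~\cite{DBLP:conf/esa/MariaS16} recalled in the paragraph preceding the lemma, is that on a $1$-vertex triangulation $\tri$ of a manifold $\M$ with $H_1(\M,\Z/2\Z)=0$, every admissible coloring $\theta \in \adm(\tri,r)$ assigns an \emph{integer} color to every edge. Under this restriction the parity constraint~(\ref{eq:cstr1}) is automatically satisfied on every triangle, so that $\adm(\tri,r)$ is precisely the set of maps $\theta\co E \to \{0,1,\ldots,\lfloor (r-2)/2\rfloor\}$ satisfying the linear constraints~(\ref{eq:cstr2}) and~(\ref{eq:cstr3}).

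Second, I would write out explicitly the dilated polytope $(r-2)\PT$ by multiplying the right-hand sides in Definition~\ref{def:canon} by $r-2$. The edge constraints become $0 \le x_e \le (r-2)/2$, the triangle inequalities become $x_{e_i} \le x_{e_j} + x_{e_k}$, and the sum constraint on each triangle becomes $x_{e_1}+x_{e_2}+x_{e_3} \le r-2$. These are exactly the linear constraints~(\ref{eq:cstr2}) and~(\ref{eq:cstr3}) coordinate-wise.

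I would then define the bijection explicitly: send $\theta \in \adm(\tri,r)$ to the vector $x_\theta = (\theta(e))_{e \in E} \in \Z^{n+v}$. The first observation says $x_\theta$ is a lattice point, and the second says it lies in $(r-2)\PT$. Conversely, to a lattice point $x \in (r-2)\PT \cap \Z^{n+v}$ I would associate the coloring $\theta_x(e) := x_e$; the inequality $x_e \le (r-2)/2$ combined with $x_e \in \Z$ forces $\theta_x(e) \in \{0,1,\ldots,\lfloor (r-2)/2\rfloor\} \subset I$, and the remaining polytope inequalities give~(\ref{eq:cstr2}) and~(\ref{eq:cstr3}), while parity~(\ref{eq:cstr1}) is automatic since all coordinates are integers. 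The two maps are manifestly inverse to each other.

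The only delicate point is the first observation, which is where all the topology enters (the hypotheses $H_1(\M,\Z/2\Z)=0$ and $1$-vertex are used only there, to eliminate half-integer colors and, consequently, the nonlinear parity constraint). Once that step is granted, the remainder is a routine matching of inequalities, which is why I would present the proof as essentially a short dictionary between the two descriptions rather than a substantive calculation.
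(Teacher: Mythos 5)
Your proposal is correct and follows essentially the same route as the paper's proof: both directions of the bijection are established by invoking the integrality of admissible colorings for $1$-vertex triangulations with trivial $H_1(\M,\Z/2\Z)$ (the observation of~\cite{DBLP:conf/esa/MariaS16}), which makes the parity constraint automatic, and then matching the remaining linear admissibility constraints with the inequalities defining the dilated polytope $(r-2)\PT$. Your write-up is merely more explicit than the paper's in spelling out the dilated inequalities and the two inverse maps.
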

\begin{proof}
Let $x$ be an integer coordinate vector such that $x \in (r-2)\PT$, \ie, $Ax \leq b$ as defined in Definition~\ref{def:canon}. Interpreting the $i^{\text{th}}$ coordinate $x_i$ of $x$ as an integer color $\theta(e_i)$ for the $i^{\text{th}}$ edge $e_i$ of $\tri$, we note that, by definition of $(r-2)\PT$, all colors $\theta(e_i)$ are positive integers smaller than $(r-2)/2$, and the coloration $\theta$ satisfies admissibility constraints~(\ref{eq:cstr1}),~(\ref{eq:cstr2}), and~(\ref{eq:cstr3}). It consequently defines an admissible coloring.
Because $H_1(\M,\Z/2\Z)$ is trivial and $\tri$ is $1$-vertex, all admissibility colorings appear as such integral vector (see discussion in Section~\ref{subsec:appTV}). \hfill
\end{proof}

Consequently, by Theorem~\ref{thm:ehr}, the number of admissible colorings in $\adm(\tri,r)$ is equal to $L(\PT,r-2)$. We call the quasi-polynomial $L(\PT,\cdot)$ the {\em Ehrhart polynomial} of the triangulation $\tri$.

\section{Triangulations with fewer colorings and estimation of running times}
\label{sec:expcolor}

Choosing the triangulation that admits the smallest number of colorings among a given set of triangulations boils down to computing the Ehrhart polynomials of the triangulations and choosing the one with the minimal value at the desired dilatation. Computing these polynomials is however expensive, and we only consider their asymptotic values, \ie, their degrees and leading coefficients.

\medskip

\subsection{Degree of Ehrhart polynomial.} For a triangulation $\tri$, the degree of $L(P_\tri, \cdot)$ is equal to the dimension of the polytope $\PT$ (Theorem~\ref{thm:ehr}).

\begin{lemma}
\label{lem:dimpoly}
Let $\tri$ be a triangulation and $\PT$ the associated polytope. Then the dimension of $\PT$ equals the number of edges in $\tri$. 
\end{lemma}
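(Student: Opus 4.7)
The plan is to show that the admissibility polytope $\PT$ is full-dimensional in its ambient space $\R^{n+v}$, since the dimension of $\PT$ is by definition at most $n+v$ (the number of variables, one per edge), and equality holds iff $\PT$ has non-empty interior in $\R^{n+v}$.

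To produce an interior point, I would take the constant coloring assigning the same small value $\varepsilon > 0$ to every edge, and verify that for $\varepsilon$ sufficiently small every defining inequality of Definition~\ref{def:canon} holds strictly. Concretely, for $0 < \varepsilon < 1/3$ we have $-\varepsilon < 0$ and $\varepsilon < 1/2$ for each edge, and for each triangle with edges $e_1,e_2,e_3$ we get $\varepsilon - 2\varepsilon = -\varepsilon < 0$ for the three triangle inequalities, and $3\varepsilon < 1$ for the upper bound inequality. Thus the constant vector $\varepsilon \cdot \mathbf{1} \in \R^{n+v}$ lies in the interior of $\PT$, so $\PT$ contains an open ball and therefore has dimension $n+v$.

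The argument has no real obstacle; the only thing to notice is the mild caveat mentioned in the paper that in a generalized triangulation some triangles may be defined by fewer than three distinct edges (e.g.\ two edges of the triangle are identified). In that case the four triangle inequalities of Definition~\ref{def:canon} degenerate, but the substitutions only strengthen strict inequalities into things like $-\varepsilon < 0$ and $2\varepsilon < 1$, which are still satisfied for small $\varepsilon$. So the construction of the interior point goes through unchanged, and $\dim \PT = n+v$ equals the number of edges of $\tri$.
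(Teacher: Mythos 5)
Your proof is correct and follows essentially the same route as the paper: exhibit a constant vector (the paper uses $(\tfrac14,\dots,\tfrac14)$, you use $\varepsilon\cdot\mathbf{1}$ for small $\varepsilon$) that satisfies every defining inequality strictly, so $\PT$ has an interior point and is full-dimensional in $\R^{n+v}$. Your extra remark about degenerate triangles in generalized triangulations is a harmless refinement of the same argument.
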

\begin{proof}
Vector $(\frac{1}{4},\frac{1}{4},\dots,\frac{1}{4})$ does not saturate any of the constraint defining $\PT$, and consequently lies in the interior of $\PT$. Hence $\PT$ is full dimensional. The dimension of the embedding space being the number of edges, this is also the dimension of $\PT$. \hfill
\end{proof}

As a consequence of Lemma~\ref{lem:dimpoly}, finding a triangulation with smallest asymptotic number of admissible colorings starts with finding a triangulation with smallest number of edges. In a closed triangulation, the number of edges is equal to $n+v$, the number of tetrahedra plus the number of vertices (Equation~\ref{eqn:euler}), which can be controlled:

\begin{theorem}[Jaco and Rubinstein~\cite{jaco03-0-efficiency}]
\label{thm:jrvertices}
Aside from a small set of exceptions, a triangulation $\tri$ of a prime, closed, orientable 3-manifold $\M$, with $n$ tetrahedra and $v$ vertices, can be turned in polynomial time into a $1$-vertex triangulation $\tri'$ of $\M$, with $n' \leq n$ tetrahedra.

Notably, if $\M$ can be triangulated with a minimal number $n_0$ of tetrahedra, $\M$ admit such minimal triangulation with $1$-vertex.
\end{theorem}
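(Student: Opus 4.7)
The plan is to reduce a multi-vertex triangulation to a $1$-vertex one through a sequence of local simplification moves that do not increase the tetrahedron count. The basic move is \emph{edge contraction}: given an edge $e$ of $\tri$ whose two endpoints $u \neq v$ are distinct, one collapses $e$ to a single point. When this operation produces a valid triangulation of the same manifold $\M$, it decreases the number of vertices by one, and leaves $n$ unchanged or reduced. When edge contraction fails due to local obstructions---for instance, the link of $e$ is not a disk, or $e$ is parallel to another edge in a way that would merge tetrahedra incorrectly---one falls back on alternative moves such as Pachner $3 \leftrightarrow 2$ and $2 \leftrightarrow 0$ moves, or crushing along normal $2$-spheres and normal disks.

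The argument then proceeds as follows. First, enumerate the edges of $\tri$ and attempt a contraction on any edge with distinct endpoints; each successful contraction decreases $v$ by one and $n$ by a non-negative amount. Second, when no edge admits a direct contraction, invoke the Jaco--Rubinstein analysis of obstructions: each obstruction corresponds to a specific local combinatorial pattern, and for each pattern there exists an alternative move that either directly reduces $v$ or modifies $\tri$ so as to unlock a subsequent edge contraction. The small exceptional set in the statement arises precisely from those manifolds whose minimal triangulations admit no such simplification chain; these are treated as base cases. Termination follows because every non-exceptional step strictly decreases the lexicographic pair $(n, v)$, which is bounded below; hence the procedure halts in $O(n+v)$ iterations, each requiring only local inspection of $\tri$ plus, at worst, a normal-surface computation of size polynomial in $n$.

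The main obstacle is establishing the uniform availability of a vertex-reducing move outside the exceptional set. This rests on the substantive content of Jaco--Rubinstein's theory of $0$-efficient triangulations: any triangulation that resists all the above moves is itself essentially $0$-efficient, and the prime, closed, orientable $3$-manifolds carrying such rigid triangulations form the short explicit list of exceptions. Once this classification is in hand, the ``Notably'' clause is immediate: if $\M$ admits a triangulation with the minimum number $n_0$ of tetrahedra, apply the algorithm to any such triangulation to obtain a $1$-vertex triangulation $\tri'$ with $n' \leq n_0$ tetrahedra; minimality of $n_0$ then forces $n' = n_0$.
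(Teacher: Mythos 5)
This statement is not proved in the paper at all: it is imported verbatim as a result of Jaco and Rubinstein~\cite{jaco03-0-efficiency}, so there is no paper-internal argument to compare against. Judged on its own terms, your proposal is a reasonable description of the \emph{shape} of Jaco--Rubinstein's strategy (edge contractions to kill extra vertices, with crushing along normal spheres and discs when contractions are obstructed), but it is not a proof: the two load-bearing steps are precisely the ones you delegate back to ``the substantive content of Jaco--Rubinstein's theory.'' Concretely, (i) you assert that whenever no edge with distinct endpoints can be contracted, some alternative move reduces $v$ or unlocks a contraction, and that the manifolds where this fails form a ``short explicit list''---but establishing this is the whole theorem. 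The actual mechanism is that the procedure produces a $0$-efficient triangulation unless $\M$ is one of the explicit exceptions ($S^3$, $\R P^3$, $L(3,1)$), and that a $0$-efficient triangulation of a closed irreducible manifold has exactly one vertex; the hard work is showing that crushing along a normal $2$-sphere does not increase the number of tetrahedra and does not change the manifold in the prime case (barrier-surface arguments), none of which is sketched. (ii) The polynomial-time claim is unsupported as stated: general normal-surface enumeration is exponential in $n$, so ``at worst, a normal-surface computation of size polynomial in $n$'' needs the specific structure of the surfaces used (vertex-linking and edge-linking spheres and discs arising from the obstructions), which you do not identify.

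Two smaller points. Your termination argument via the lexicographic pair $(n,v)$ is fine for the moves you name (contraction, $3$--$2$, $2$--$0$, crushing all have $n' \leq n$), but it only works once (i) above is established, i.e.\ once you know a $v$-reducing or $n$-reducing move is always available outside the exceptional set; otherwise the procedure can stall with $v>1$. And in the ``Notably'' clause you should also dispose of the exceptional manifolds separately (they do admit $1$-vertex minimal triangulations, but this does not follow from your algorithm, which by construction excludes them); the rest of that clause---apply the reduction to a minimal triangulation and use $n' \leq n_0$ together with minimality---is correct.
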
 

As a consequence, for a fixed prime, closed, orientable 3-manifold $\M$, the number of admissible colorings $\adm(\tri,r)$ for any triangulation $\tri$ of $\M$ grows at least as fast as $\Omega(\left(\frac{r}{c}\right)^{n_0+1})$, where $n_0$ is the size of a minimal triangulation for $\M$ and $c$ is a constant. This is asymptotically optimal. We study the constant $c$, in connection with Ehrhart theory, in the next section.

\subsection{Leading coefficient of Ehrhart polynomial.} The leading coefficient of $L(\PT, \cdot)$ is constant and equal to the volume of the polytope $\PT$ (Theorem~\ref{thm:ehr}). 
Computing the volume of a polytope is $\#$P-hard. However, there exist approximation schemes that are polynomial in the dimension of the polytope. For instance,~\cite{Emiris14} gives an $\epsilon$-approximation of the volume of a $d$-dimensional polytope defined by $m$ hyperplanes in time $O(\epsilon^{-2}md^3\log(d))$ (other terms depending on the geometry of the polytope are hidden). 

We study experimentally the values of the volumes of polytopes associated to minimal triangulations of closed, prime, oriented 3-manifolds from the census.

\begin{figure}[h]
\includegraphics[width=0.7\columnwidth]{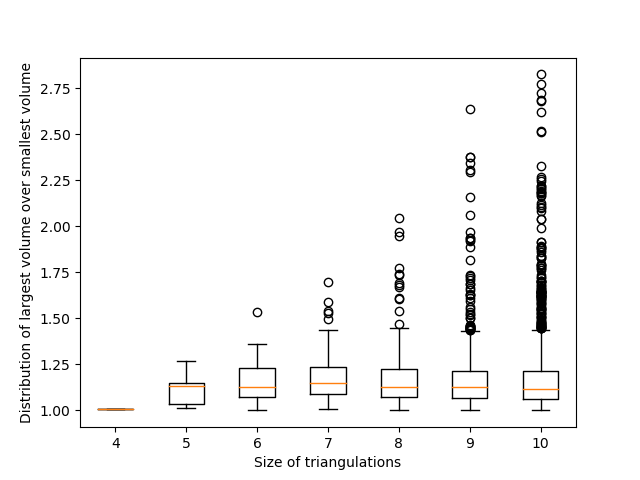}
\caption{Distribution of the ratio of the largest volume over the smaller volume (Equation~\ref{eq:ratiovolminT}) of polytopes associated to minimal triangulations, for all prime closed oriented 3-manifold admitting minimal triangulations of size $4$ to $10$.}
\label{fig:volume_amp}
\end{figure}

\medskip

\paragraph{Volumes of minimal triangulations of a same manifold.} 
Figure~\ref{fig:volume_amp} presents the distribution, over all prime closed orientable 3-manifolds $\M$ admitting minimal triangulations of up to 10 tetrahedra, of the ratio of the largest volume over the smallest volume of any polytope $\PT$ associated to a minimal triangulation of $\M$, \ie, 
\begin{equation}
\label{eq:ratiovolminT}
\displaystyle\frac{\max \{\vol \PT | \tri \ \text{minimal triangulation of} \ \M\} }{\min \{\vol \PT | \tri \ \text{minimal triangulation of} \ \M\} }
\end{equation}
Volumes for triangulations of up to $8$ tetrahedra are exact, computed with Normaliz~\cite{Normaliz}, and volumes for $9$- and $10$-tetrahedra triangulations are approximated within 5\% with the approximation library VolEsti~\cite{Emiris14,VolEsti}.

The ratio of a majority of manifolds is close to $1$. However, more outliers appear with larger triangulations, which is explained by the fact that manifolds tend to have an increasing number of distinct minimal triangulations for a growing number of tetrahedra. Consequently, on certain manifolds of up to 10 tetrahedra, choosing the appropriate minimal triangulation leads to a gain of a factor close to $3$ when minimizing the asymptotic number of admissible colorings. 
Note that manifolds with less than 4 tetrahedra are omitted because they are few with several minimal triangulations. 

In the following, we call a minimal $1$-vertex triangulation $\tri$, with smallest volume $\vol \PT$ over all such triangulations of a manifold $\M$, the \emph{optimal triangulation} for $\M$. In the following experiments, we use optimal triangulations of manifolds.

\medskip

\paragraph{Polytope volumes of increasingly large minimal triangulations.} 
Figure~\ref{fig:volume_distrib} presents the distribution of the volumes of polytopes $\PT$ for optimal triangulations of manifolds in the census, ranging from $1$ to $10$ tetrahedra. We observe a clear exponential decay of the volumes of the polytopes for optimal triangulations of larger sizes. 

By interpolating the medians, and using Ehrhart theory, we deduce that in average, an optimal triangulation $\tri$, with $n+1$ edges, admits experimentally $\Theta(\left[\frac{r-2}{3.97}\right]^{n+1})$ admissible colorings in $\adm(\tri,r)$. This is a (practical) improvement by an exponential factor over the expected $O(\left[\frac{r}{2}\right]^{n+1})$ admissible colorings predicted by the theory for 3-manifolds with trivial homology, and $O(\left[r-1\right]^{n+1})$ in the general case. 

\begin{figure}
\includegraphics[width=0.7\columnwidth]{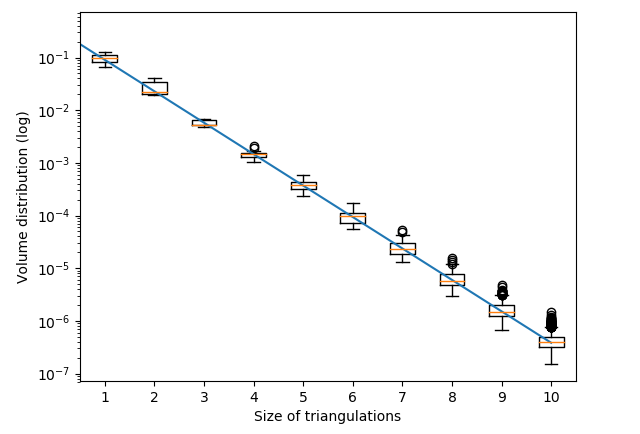}
\caption{Distribution of the volumes of optimal triangulations for all manifolds of the census, up to $10$ tetrahedra. Medians are interpolated (r-value $= -0.9999$) by a straight line of slope $-0.598$ in log scale.}
\label{fig:volume_distrib}
\end{figure}

\subsection{Accuracy of the parameter}

For an optimal triangulation $\PT$ with $n+1$ edges, we study the accuracy of the parameter $\vol \PT \cdot (r-2)^{n+1}$ (\ie, the leading term of the Ehrhart polynomial counting $\adm(\tri,r)$) to estimate the number of admissible colorings of $\tri$, and the running time of the backtracking algorithm.

\paragraph{Estimating the number of admissible colorings.} Figure~\ref{fig:worstacc} presents the ratio:
\begin{equation}
\label{eq:ratioerr}
	\max \left\{\frac{\vol \PT \cdot (r-2)^{n+1}}{ \#\adm(\tri,r) }, \frac{\#\adm(\tri,r)}{\vol \PT \cdot (r-2)^{n+1}}  \right\} \geq 1, 
\end{equation}
for optimal triangulations with $n$ tetrahedra, $1 \leq n \leq 6$, of the manifolds in the census with trivial homology $H_1$. Curves give the worst values of the ratio~(\ref{eq:ratioerr}) over optimal triangulations of the same size.
As we are considering a first order approximation of the Ehrhart polynomials, we observe an expected error of order $O(r^{-1})$, and the quantity $\vol \PT \cdot (r-2)^{n+1}$ appears to be an accurate estimator of the size of $\adm(\tri,r)$ for moderate values of $r$ and above.

\begin{figure}
\includegraphics[width=0.7\columnwidth]{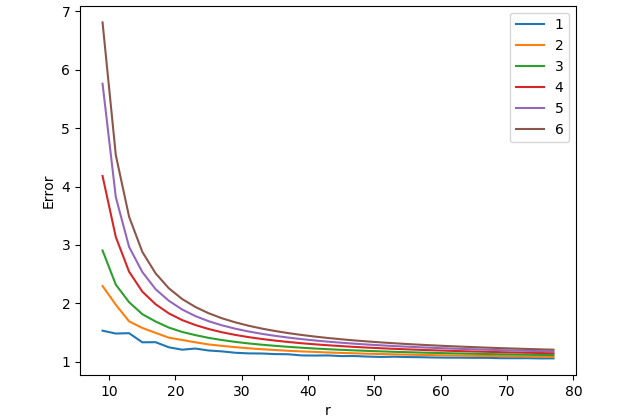}
\caption{Maximal value of ratio (Equation~(\ref{eq:ratioerr}) between the estimator and number of admissible colorings for optimal triangulations with trivial $H_1$ of the census, from $1$ to $6$ tetrahedra, for increasing $r$.}
\label{fig:worstacc}
\end{figure}

\paragraph{Estimating the combinatorial complexity of the backtracking algorithm.} Figure~\ref{fig:estimbacktrack} presents the ratio between the size of the search tree traversed by the backtracking algorithm, and the number of admissible colorings:
\begin{equation}
\label{eq:rationumcolbacktrack}
\frac{\text{size backtracking search}}{\#\adm(\tri,r)} \geq 1,
\end{equation}
for optimal triangulations $\tri$ of the census of $5$-tetrahedra triangulations with trivial homology $H_1$. The size of the backtracking search is an estimate of the number of combinatorial operations the backtracking algorithm for $\TV_r(\tri)$ performs. 
For moderate values of $r$ and above, we observe that the backtracking search is output-sensitive in the size of $\adm(\tri,r)$, and there is no triangulation of size $5$ that traverses a search tree larger that $4 \cdot \adm(\tri,r)$ for $r \geq 39$.

\begin{figure}[h]
\includegraphics[width=0.7\columnwidth]{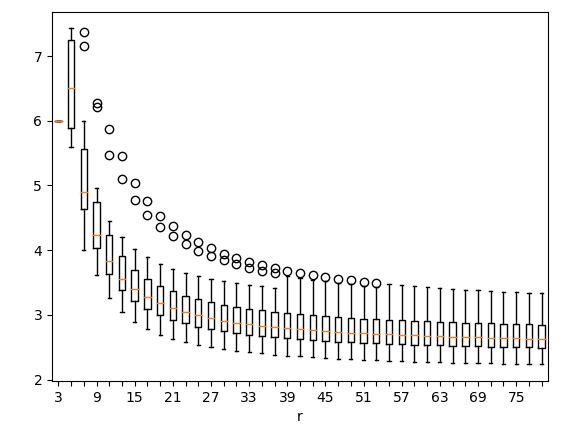}
\caption{Distribution, against $r$, of the number of tree nodes explored by the backtracking divided by the number of valid colorings for all manifolds with trivial homology $H_1$ and smallest triangulation of size $5$.}
\label{fig:estimbacktrack}
\end{figure}

\medskip

In conclusion to this section, given a ($1$-vertex) triangulation $\tri$ of a closed $3$-manifold, we can compute efficiently the leading term $\vol \PT \cdot (r-2)^{n+1}$ of the Ehrhart polynomial, and estimate accurately both the number of admissible colorings in $\adm(\tri,r)$ and the expected running time of the backtracking algorithm to compute Turaev-Viro invariants on $\tri$. We deduce a preprocessing strategy. 

It is worthwhile, when computing large $r$ asymptotics of Turaev-Viro invariants of a manifold $\M$, to find an optimal (or close-to-optimal) triangulation of $\M$ minimizing the leading of the corresponding Ehrhart polynomial. When preprocessing an input triangulation, we consequently reduce it (in polynomial time) into a $1$-vertex triangulation (Theorem~\ref{thm:jrvertices}), and search for a minimal such one. The second step is a theoretically hard procedure, however proceeding to random combinatorial modifications of the triangulation, preserving its underlying topology, such as {\em bistellar flips}, is efficient in practice for reducing the size of triangulations~\cite{Burton12CompTopWRegina}. Additionally, keeping track of the minimal triangulations encountered, we record the volume of their associated polytopes $\PT$, and keep the minimal triangulation of smallest polytope volume.

In the following, we use census triangulations that are already minimal, and we select the optimal such triangulation.

\section{Multi-precision arithmetics}
\label{sec:multiprec}

The enumeration of the colorings and the complexity of the computation of their weights imply a large number of arithmetic operations during the computation of Turaev-Viro invariants. In this section we discuss this aspect of the computation.

\begin{figure}[ht]
\includegraphics[width=0.7\columnwidth]{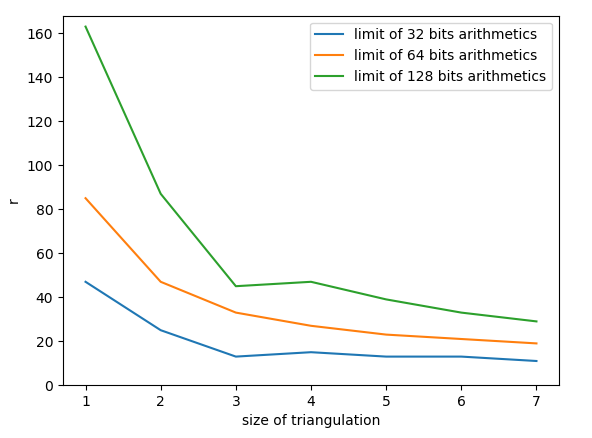}

\caption{Maximal value of the order $r$ for different triangulation sizes such that the error on the multi-precision computation of $\TV_r$ is smaller than $5\%$ on at least half of the census, for 32, 64 and 128 bits.}
\label{fig:multi}

\end{figure}

The weights for Turaev-Viro invariants~\cite{turaev92-invariants} are complex numbers, and the Turaev-Viro invariant $\TV_r(\tri)$ of an $n$-tetrahedra triangulation is made of an exponential large ($r^{O(n)}$) sum of products of $O(n)$ weights.
As a consequence, arithmetic errors may accumulate. We implement multi-precision arithmetic using the {\tt MPFR} library~\cite{MPFR}, within the backtracking algorithm of {\tt Regina}~\cite{regina}. 
Figure 6 illustrates the importance of high-precision arithmetics for computing Turaev-Viro invariants. For a given arithmetic precision, the curve in Figure 6 illustrates the maximal value of $r$ for which the output of the backtrack algorithm is \emph{correct} on more than half of the census. We define \emph{correctness} by a relative error of less than $5\%$. We note that for single precision $64$-bits arithmetic, the computation is already incorrect for values of $r \geq 33$ on more than half the triangulations larger than $3$ tetrahedra.

Note that the estimator of the number of admissible colorings from Section~\ref{sec:ehrhart} can be used to estimate the number of arithmetic operations to be performed by the backtracking algorithm, and in turn set the necessary precision of arithmetic numbers to compute a provably correct approximation of $\TV_r$. However, bounds are pessimistic, and we use the following practical approach.

For each order $r$, we compute each invariant at least twice: with a given number of bits and with twice that number. If the relative difference between the two quantities is larger than a given threshold, we start again with twice the number of bits for the two values until the gap is small enough. For the following order, keep the numbers of bits that were sufficient at the previous order $r$. We begin the computation of the invariants of order $r=3$ using $128$ and $256$ bits. 

Note that the Turaev-Viro invariant can be equal to zero, in that case the relative difference between our two computations may be large. We used a threshold bellow which the invariant is considered equal to zero. 

We validated this approach by testing it for some small triangulations and for small orders on larger triangulations and comparing the results to the same computation with a a larger number of bits. The use of multi-precision produces a large impact on computation times. Figure~\ref{fig:pathos} shows the difference between the backtracking algorithm with single precision arithmetic (leading to erroneous values of $\TV_r$ already for small $r$), the backtracking algorithm using exact encoding of algebraic numbers, and the multi-precision arithmetic we implemented. Both exact and multi-precision arithmetic lead to correct computation of $\TV_r$. We however observe that the multi-precision approach is about $4$ orders of magnitude faster in practice.

Note that the multi-precision curve presents two jumps, one for $r=39$ and one for $r=71$. They correspond to the doubling of arithmetic precision following the strategy described above.

\section{Asymptotics of the sequence of Turaev-Viro invariants}
\label{sec:convergence}

In this section, we verify experimentally and extend, based on our observations, Chen and Yang's volume conjecture~\cite{ChenY2018}, and its extension by Detcherry, Kalfagianni, and Yang~\cite{DetcherryKY2018}, for closed 3-manifolds. The conjecture asserts that the growth rate of the Turaev-Viro invariants of a manifold $\M$ is governed by the {\em simplicial volume} of $\M$. Note that there are several other volume conjectures for manifolds (non-compact, with boundary, etc), but we state only the case of closed manifolds.

\subsection{Volume conjecture for 3-manifolds}

The \emph{simplicial volume} of a (compact oriented) manifold $M$, denoted by $||M||$, is a topological invariant introduced by Gromov~\cite{zbMATH03816552} as a norm on singular homology. Two very important families of manifolds appear in our experiments, \emph{graph manifolds} and \emph{hyperbolic} manifolds.

All closed orientable irreducible 3-manifolds that admit triangulations with at most $8$ tetrahedra are \emph{graph manifolds}~\cite{Matveev07AlgorTopClassif3Mflds}, and they all have simplicial volume $0$. 

There exist closed hyperbolic 3-manifolds, \ie, manifolds admitting a (unique) complete hyperbolic geometry, that can be triangulated with $9$ tetrahedra. For an hyperbolic 3-manifolds $M$, the simplicial volume $||M||$ is strictly positive, and is equal to the hyperbolic volume of $M$ times a universal constant $1/v_3$. In that sense, the simplicial volume is a generalization of the hyperbolic volume to all compact 3-manifolds. 

By work of Jaco-Shalen-Johannson, and the Thurston-Hamilton-Perelman geometrization theorem (see~\cite[Section~2.4]{Matveev07AlgorTopClassif3Mflds}), graph manifolds and hyperbolic manifolds form the building blocks of any 3-manifold.

A motivation of this paper is the following extension of the following conjecture by Chen, Yang~\cite{ChenY2018} (for hyperbolic manifolds), and Detcherry, Kalfagianni, Yang~\cite{DetcherryKY2018} (for general manifolds):

\begin{conjecture}[C-Y~\cite{ChenY2018}, D-K-Y~\cite{DetcherryKY2018}]
\label{conj:chentv}
For a closed oriented hyperbolic 3-manifold $\M$, let $\TV_r(\M)$, $r \geq 3$, be its Turaev-Viro invariants (at $\SL_2(\C)$ with $q=2$) and let $||\M||$ be its simplicial volume. Then, for $r$ running over all odd integers, 
\begin{equation}
\label{eq:volconjhyp}
\lim_{r\rightarrow + \infty} \frac{2\pi}{r}\log(\TV_r(\M)) = v_3||\M||. 
\end{equation}

\medskip

Additionally, for any closed oriented (not necessarily hyperbolic) 3-manifold $\M$, and $r$ running over all odd integers, 
\begin{equation}
\label{eq:volconjlimsup}
\limsup_{r\rightarrow + \infty} \frac{2\pi}{r}\log(\TV_r(\M)) = v_3||\M||.
\end{equation}
\end{conjecture}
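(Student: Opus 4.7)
The statement is a conjecture rather than a theorem to be rigorously proved, so the plan here is to describe how I would produce experimental evidence for Conjecture~\ref{conj:chentv} using the machinery built up in the previous sections of the paper. The overall strategy is to compile a database of values $\TV_r(\M)$ for $r$ ranging over odd integers as large as feasible (in the neighborhood of $r \approx 100$), over all closed oriented prime 3-manifolds $\M$ in a census of small-triangulation manifolds, and then to inspect the numerical sequence $\frac{2\pi}{r}\log\TV_r(\M)$ for convergence to the predicted value $v_3\|\M\|$.

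First I would apply the preprocessing of Section~\ref{sec:expcolor}: for each manifold $\M$ of the census, enumerate its minimal triangulations (or obtain them from the census directly), reduce to $1$-vertex form via the Jaco--Rubinstein moves of \refthm{jrvertices}, and pick the \emph{optimal} triangulation, i.e.\ the one minimizing $\vol \PT$. This controls the leading coefficient of the Ehrhart polynomial and hence the size of $\adm(\tri,r)$ in the backtracking enumeration. Then, using the optimized backtracking algorithm with the adaptive multi-precision scheme of Section~\ref{sec:multiprec}---starting at $128$ bits, doubling whenever two successive precisions disagree by more than the chosen threshold---I would compute $\TV_r(\tri)$ for all odd $r$ up to the largest value the budget allows.

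For verification, I would split the census by geometric type. For graph manifolds, which exhaust the $n \leq 8$ census and satisfy $\|\M\| = 0$, I would plot $\frac{2\pi}{r}\log\TV_r(\M)$ against $r$ and check that it tends to $0$; the predicted growth of $\TV_r$ is subexponential in $r$, and the estimator of Section~\ref{sec:ehrhart} can be used to anticipate how quickly the sequence enters the regime where numerical effects are dominated by the genuine limit rather than by arithmetic noise. For the hyperbolic manifolds appearing at $n=9$, I would compute the hyperbolic volume independently (via SnapPy or the exact value for the Weeks manifold), multiply by $1/v_3$ to obtain $\|\M\|$, and then check the convergence in \refeqn{volconjhyp}. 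For the $\limsup$ statement~\refeqn{volconjlimsup}, the same data suffices: it is enough to exhibit a subsequence along which the ratio approaches $v_3\|\M\|$.

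The main obstacle is computational cost combined with slow convergence. The asymptotic $\frac{2\pi}{r}\log\TV_r$ is expected to stabilize only for rather large $r$, yet the size of $\adm(\tri,r)$ grows like $\vol \PT \cdot (r-2)^{n+1}$ by Section~\ref{sec:ehrhart}, so even with the optimal triangulation the number of colorings explodes, and the required arithmetic precision grows with $r$ since the sum in~\refeqn{tv} has $r^{\Theta(n)}$ terms of oscillating sign whose partial cancellations force ever more bits to survive. A secondary difficulty is the $\TV_r(\M)=0$ case, where the log diverges and one must use a numerical threshold to declare the invariant vanishing; this has to be handled carefully to avoid false positives. Provided these issues are managed by the precision-doubling protocol of Section~\ref{sec:multiprec} and by restricting to manifolds where one can reach $r$ large enough for the sequence to visibly settle, the resulting plots constitute the experimental verification of Conjecture~\ref{conj:chentv} on the census.
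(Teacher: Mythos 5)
Your proposal matches the paper's own treatment: the statement is an open conjecture, and the paper verifies it experimentally exactly as you describe---optimal $1$-vertex triangulations from the census, the backtracking algorithm with the adaptive multi-precision scheme, restriction to orders $r$ with $\TV_r(\M)\neq 0$, and separate convergence checks for graph manifolds ($\|\M\|=0$, $n\leq 7$) and the $9$-tetrahedra hyperbolic manifolds against $v_3\|\M\|$. The only differences are matters of scale (the paper reaches $r_{\max}=77$ for $5$-tetrahedra manifolds and only $r_{\max}=39$ for the hyperbolic ones, rather than $r\approx 100$ throughout) and that it additionally fits explicit convergence models to quantify the $\widetilde{O}(1/r)$ rate.
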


This volume conjecture has attracted attention from the topology community.
Notably, Equation~(\ref{eq:volconjhyp}) has been proved for closed hyperbolic manifolds obtained by integral Dehn surgery along the figure eight knot~\cite{def7c30f33224ebea35e3ddaab3a1b1b}. Additionally, any closed 3-manifold $\M$ with $||\M||=0$ satisfies Equation~(\ref{eq:volconjlimsup})~\cite{detcherry2017gromov}. Further, a convergence estimate has been proved for {\em Seifert fibered spaces} (a subclass of graph manifolds), where $\frac{2\pi}{r}\log(\TV_r(\M)) \in O(\log r / r)$.

In the following, we apply the optimized computation of Turaev-Viro invariants described in earlier sections in order to verify experimentally the behavior of the sequences $(\frac{2\pi}{r}\log(TV_r(\M)))_r$ on the census of closed 3-manifolds for large $r$. Our experimental study below allows us to refine the volume conjecture:

\begin{conjecture}
\label{conj:new}
Let $\M$ be any closed oriented (not necessarily hyperbolic) 3-manifold, $\TV_r(\M)$ and $||\M||$ as above. Then, the set of odd integers r such that $\TV_r(\M) \neq 0$ (denoted $R^*_{\M}$) is infinite. Additionally, when $r$ runs over $R^*_{\M}$, we have,
\[\liminf_{r\rightarrow + \infty} \frac{2\pi}{r}\log(TV_r(\M)) = v_3||\M||.\]
\end{conjecture}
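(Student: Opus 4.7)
Since this statement is a conjecture rather than a theorem, a full proof lies outside current technology; my plan is therefore to describe (a) a theoretical attack decomposing the conjecture into two manageable claims, and (b) the experimental verification strategy the paper almost surely adopts to motivate the refinement over Conjecture~\ref{conj:chentv}.

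First, I would separate the infinitude claim on $R^*_{\M}$ from the asymptotic claim. For the infinitude claim, I would argue by contrapositive: if $\TV_r(\M) = 0$ for all sufficiently large odd $r$, then $\limsup_r \frac{2\pi}{r}\log \TV_r(\M) = -\infty$, directly contradicting Equation~(\ref{eq:volconjlimsup}) of the ambient Detcherry--Kalfagianni--Yang conjecture (the right-hand side $v_3\|\M\|$ being nonnegative). So the infinitude of $R^*_{\M}$ is essentially automatic modulo Conjecture~\ref{conj:chentv}, and it can be made unconditional for the families where that conjecture is already known (figure-eight Dehn surgeries, graph manifolds with $\|\M\|=0$ via the Detcherry--Kalfagianni Gromov-norm bound, and Seifert-fibered spaces via explicit formulas).

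Second, for the asymptotic identity, the inequality $\liminf \le v_3\|\M\|$ over $r \in R^*_{\M}$ follows from $\limsup_{r \to \infty} \frac{2\pi}{r}\log \TV_r(\M) \le v_3\|\M\|$, which Detcherry--Kalfagianni--Yang established unconditionally. The harder direction is $\liminf \ge v_3\|\M\|$: one must show that along \emph{every} infinite subsequence of nonvanishing odd $r$, the normalized logarithm stays asymptotically above $v_3\|\M\|$. I would try to extract this from a saddle-point analysis of the state-sum~\eqref{eq:tv} at the relevant roots of unity, using the admissibility polytope $\PT$ from Section~\ref{sec:ehrhart} to localize the dominant contributions near the interior critical point and adapting the Costantino-style Gromov-norm lower bounds to ensure that \emph{conditional} on nonvanishing, cancellations cannot destroy more than a subexponential factor.

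The main obstacle is precisely this last step: proving a uniform lower bound conditional on $\TV_r \neq 0$. The state-sum is an alternating sum of quantum $6j$-symbols, and the whole difficulty of the volume conjecture lies in controlling sign cancellation; the refinement to $\liminf$ over $R^*_{\M}$ isolates exactly this phenomenon, because any subsequence where $|\TV_r|$ is anomalously small but nonzero would falsify it. Absent a proof, the experimental validation I would perform is: use the optimized backtracking of Sections~\ref{sec:expcolor}--\ref{sec:multiprec} to compute $(\TV_r(\M))_r$ up to $r \approx 100$ for every closed manifold in the $n \le 9$ census, identify $R^*_{\M}$, and plot $\frac{2\pi}{r}\log \TV_r(\M)$ restricted to $R^*_{\M}$; observing that the lower envelope of this sequence stabilizes near $v_3\|\M\|$ (as $0$ for graph manifolds, as $v_3$ times the hyperbolic volume for the hyperbolic census entries) would constitute the evidence that justifies upgrading $\limsup$ in~(\ref{eq:volconjlimsup}) to $\liminf$ over $R^*_{\M}$.
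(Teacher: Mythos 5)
The statement is a conjecture, and the paper offers no proof---only the experimental evidence of Section~\ref{sec:convergence}, which is precisely the verification strategy you describe: compute $(\TV_r(\M))_r$ on the census with the optimized backtracking and multi-precision arithmetic, discard the orders where $\TV_r(\M)=0$, and observe that along $R^*_{\M}$ the sequence $\frac{2\pi}{r}\log(\TV_r(\M))$ converges to $v_3||\M||$ (to $0$ for the graph manifolds with at most $7$ tetrahedra, and to the hyperbolic volume for the $9$-tetrahedra hyperbolic manifolds), which is what motivates upgrading the $\limsup$ of Conjecture~\ref{conj:chentv} to a $\liminf$ over $R^*_{\M}$. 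Your additional theoretical decomposition (infinitude of $R^*_{\M}$ conditional on the $\limsup$ conjecture, the upper bound from the Detcherry--Kalfagianni Gromov-norm estimate, and the identification of the conditional lower bound against cancellation as the genuine obstruction) goes beyond anything in the paper but is consistent with it and correctly locates where a proof would have to do real work.
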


This last conjecture completes the above volume conjectures, and both of them fully characterize the asymptotics behavior of the Turaev-Viro sequence of invariants, for any closed $3$-manifolds.

Additionally, for a fixed 3-manifold $M$, we observe, on all manifolds studied, a rate of convergence of the sequence $\frac{2\pi}{r}\log(TV_r(\M))$ towards its limit $v_3||\M||$, of order $\widetilde{O}(\frac{1}{r})$, where $\widetilde{O}$ hides a poly-log factor in $r$.

\paragraph{Measuring the speed of convergence.} Let $\M$ be a manifold for which all Turaev-Viro invariants have been computed until some index $r_{\max}$. Consider the quantity:
\[ 
	S_r(\M) =  \max_{r\leq k\leq r_{\max}}\left(\left|\ \frac{2\pi}{k}\log(TV_k(\M)) - v_3||M||\ \right| \right).
\]
This quantity decreases when $r$ grows, and allows us to study the convergence of the sequence $\frac{2\pi}{r}\log(TV_r(\M))$. We study two sets of manifolds.

\begin{figure}[t]
\includegraphics[width=0.7\columnwidth]{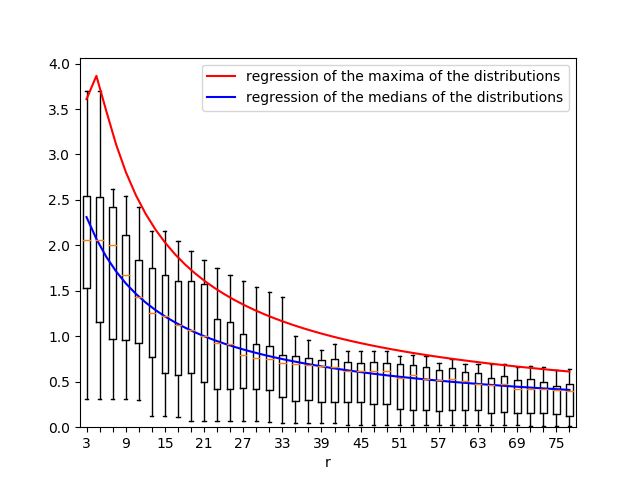}
\caption{Distribution of $(S_r(\M))_r$ for all manifolds with minimal triangulation of size $5$ in the census. The maxima and medians of the distributions are interpolated with fitting Model~\ref{model:1}.}
\label{fig:Sall5}
\end{figure}

\begin{figure}[t]
\includegraphics[width=0.7\columnwidth]{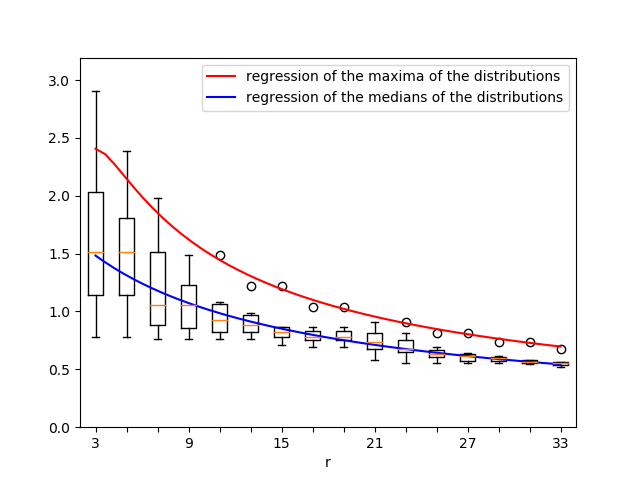}
\caption{Graph of $(S_r(\M))_r$ for non-Seifert fibered graph manifolds with $7$-tetrahedra triangulations. They are obtained by gluing two Seifert fibered spaces along their boundaries. The regression with Model~\ref{model:2} is shown in red.}
\label{fig:graphmanconv}
\end{figure}

\subsection{Graph manifolds with up to $7$ tetrahedra}

In this section we are considering all manifolds with minimal triangulations with at most $6$ tetrahedra, and some triangulations with $7$ tetrahedra. As per the discussion above, they all have simplicial volume $0$. Several manifolds from the census have recurring values of $r$ for which their Turaev-Viro invariant equals $0$ (and the term $\log \TV_r$ is undefined). For all manifolds $\M$, we only consider orders $r \in R^*_\M$ for which $\TV_r(\M) \neq 0$ in the experiments. No sequence appear to become uniformly trivial.

\medskip

Figure~\ref{fig:Sall5} presents, for every closed $3$-manifolds whose minimal triangulation admits $5$ tetrahedra, the distribution of the quantity $S_r(\M)$ for an $r_{\max} = 77$. Appendix~\ref{app:figTV} (Figures~\ref{fig:all1},~\ref{fig:all2},~\ref{fig:all3},~\ref{fig:all4},~\ref{fig:all6}) introduces similar plots for the other minimal triangulations with $\leq 6$ tetrahedra. All sequences appear to converge towards $0$. We interpolate the maximal and median values of $S_r$ over the census with a fitting model:
\begin{equation}
\tag{1}
\label{model:1}
x\mapsto \frac{a\log(x+b)}{x+b}. 
\end{equation}

All manifolds in the census with less than $6$ tetrahedra are Seifert fibered spaces, and the $O(\log r / r)$ behavior is proved for the $\limsup$ of the sequence~\cite{detcherry2017gromov} for such spaces. Figure~\ref{fig:Sall5} emphasizes that the $\liminf$ of the sequence, for all values for which $\TV_r \neq 0$, also converges towards the quantity $v_3||\M||=0$, with experimental convergence $O(\log r / r)$. 

We draw similar conclusions when extending our analysis to the first seven $7$-tetrahedra triangulations of graph manifolds (volume simplicial $0$), that are not Seifert fibered ; see Figure~\ref{fig:graphmanconv}. 

These experiments motivate the new conjecture~\ref{conj:new}, describing fully the convergence of the sequence of Turaev-Viro invariants for all graph manifolds.

\medskip

Additionally, when comparing similar plots for minimal triangulations of $\leq 6$ tetrahedra (see Appendix~\ref{app:figTV}), we notice that the fitting curves, describing the convergence of the sequence, are almost independent of the number of tetrahedra of the triangulations. This suggests the existence of a \emph{slowly growing function} $f$, in the minimal number of tetrahedra, such that for any graph manifold $\M$ that can be triangulated with $n$ tetrahedra, if $\TV_r(\M) \neq 0$ then: 
\[	
   \left| \frac{2 \pi}{r}\log(\TV_r(\M)) \right| \leq f(n) \cdot \frac{\log r}{r}
\]

\subsection{Hyperbolic manifolds with $9$ tetrahedra}
We verify experimentally Conjecture~\ref{conj:chentv} for the four $9$-tetrahedra closed hyperbolic $3$-manifolds. For $9$-tetrahedra triangulations, the computation are much more challenging, and we compute the Turaev-Viro invariants up to $r_{\max} = 39$.
Figure~\ref{fig:seqHyp1} presents the sequence $2\pi/r \log \TV_r(\M)$ of Conjecture~\ref{conj:chentv} for the Weeks manifold, and we observe its convergence towards $v_3||\M|| = 0.9427\ldots$ (the {\em hyperbolic volume} of the manifold). The sequence is interpolated  with the following fitting model:
\begin{equation}
\tag{2}
\label{model:2}
x\mapsto \frac{a}{x+b} + c.
\end{equation}

We observe a similar converging behavior for the other three hyperbolic manifolds with a nine tetrahedra triangulation (see Figures~\ref{fig:b098},~\ref{fig:b101} and~\ref{fig:b126} in Appendix~\ref{app:figTV}).

We assume that we could not observe the potential $\log x$ factors, as in Model~\ref{model:1}, due to lack of data points. However, the $9$-tetrahedra closed hyperbolic $3$-manifold show the convergence behavior predicted by the volume conjecture, and the convergence of the sequence is experimentally in $\widetilde{O}{\left(\frac{1}{r}\right)}$, where $\widetilde{O}$ hide poly-log factors in $r$.

\begin{figure}[t]
\includegraphics[width=0.7\columnwidth]{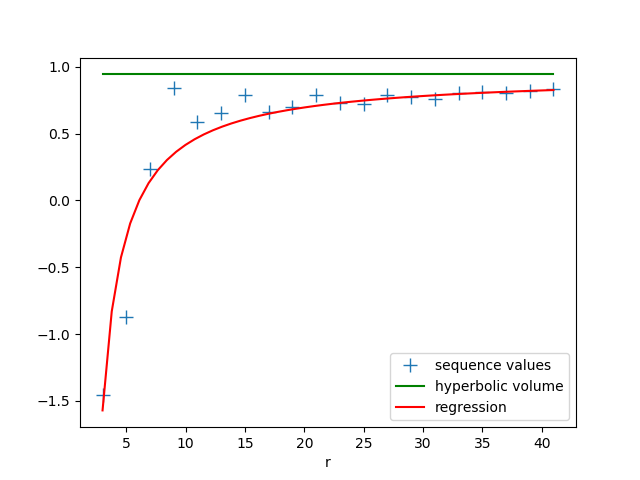}
\caption{Graph of the sequence of Conjecture~\ref{conj:chentv} for the Weeks manifold. The expected limit of the sequence is shown in green. The regression with Model~\ref{model:2} is shown in red.}
\label{fig:seqHyp1}
\end{figure}

\section*{Conclusion}
\label{sec:conclusion}

The Turaev-Viro invariants constitutes an important family of topological invariants for three manifolds. Their combinatorial nature led to several algorithms for their computations and they are efficient to distinguish different manifolds. The recent volume conjecture of Chen and Yang stressed the importance of studying $(\TV_r)_{r \geq 3}$ for large $r$, which is hard in practice.

In this paper, we aimed at computing $(\TV_r)$ for large $r$. To that extend, we looked at two algorithms: a FPT algorithm, theoretically faster but memory consuming, and a backtracking one, with negligible memory cost and good performances on small triangulations.

The contributions of this paper are twofold. First, we establish a easily computable estimator of the complexity
of a triangulation with respect to the backtracking algorithm. This is based on the estimation of the number of integer points inside a polytope and done using Ehrhart polynomial. It leads to a preprocessing that minimize the size of the search tree for the backtracking algorithm. Enriched with the use of multi-precision arithmetics, this approach has allowed the computation of sequences of Turaev-Viro invariants for manifolds with small triangulations.
In turn, these experiments have made possible the elaboration of new conjectures extending the seminal one of Chen and Yang.

While we pursue our computations with increasingly large triangulations, the asymptotic behaviors becomes harder to observe. Future work in this direction will require a finer understanding of the complexity of computing Turaev-Viro invariants, in particular in relation of topological properties of the manifold studied ; see for example~\cite{DBLP:journals/focm/MariaS20}.

\newpage

\appendix

\begin{figure*}
\section{Additional data about the Turaev-Viro sequence}
\label{app:figTV}
This section provides additional experiments for Section~\ref{sec:convergence}.

     \centering
     \begin{subfigure}[b]{0.49\textwidth}
         \centering
         \includegraphics[width=\columnwidth]{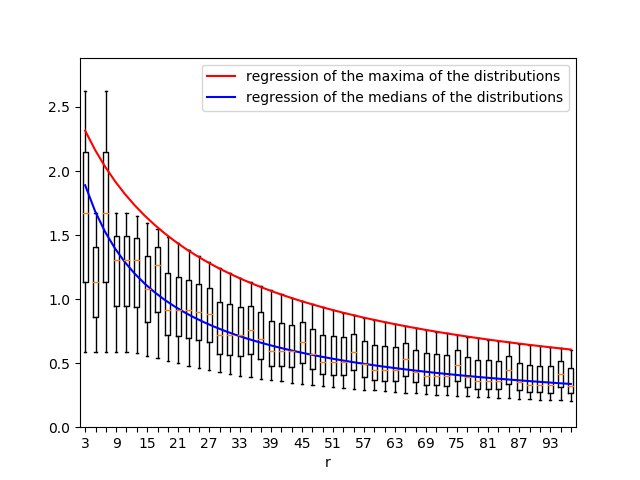}
         \caption{$n=1$}
         \label{fig:all1}
     \end{subfigure}
     \hfill
     \begin{subfigure}[b]{0.49\textwidth}
         \centering
         \includegraphics[width=\columnwidth]{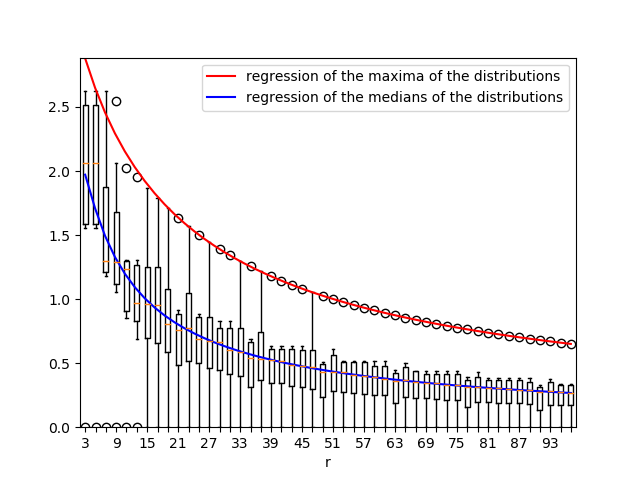}
         \caption{$n=2$}
         \label{fig:all2}
     \end{subfigure}
     \hfill
     \begin{subfigure}[b]{0.49\textwidth}
         \centering
         \includegraphics[width=\columnwidth]{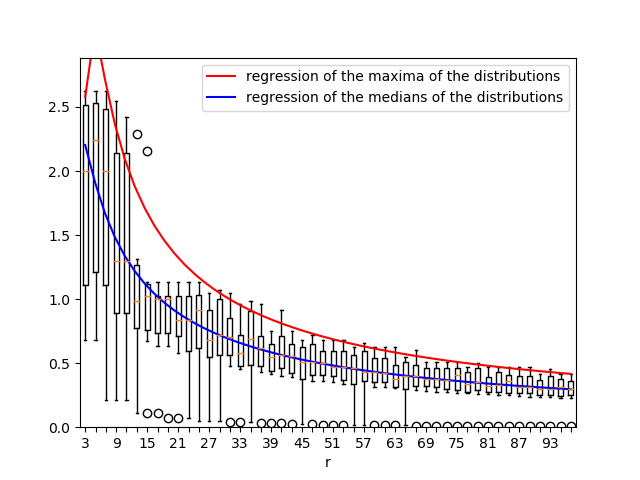}
         \caption{$n=3$}
         \label{fig:all3}
     \end{subfigure}
     \hfill
     \begin{subfigure}[b]{0.49\textwidth}
         \centering
         \includegraphics[width=\columnwidth]{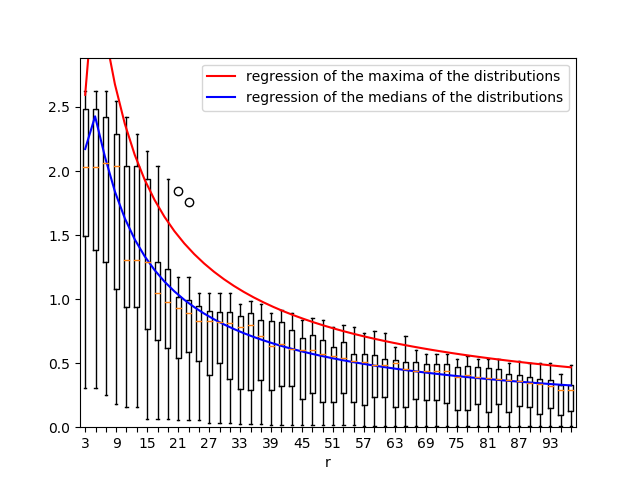}
         \caption{$n=4$}
         \label{fig:all4}
     \end{subfigure}
     \hfill
     \begin{subfigure}[b]{0.49\textwidth}
         \centering
         \includegraphics[width=\columnwidth]{all_5.png}
         \caption{$n=5$}
         \label{fig:all5}
     \end{subfigure}
     \hfill
     \begin{subfigure}[b]{0.49\textwidth}
         \centering
         \includegraphics[width=\columnwidth]{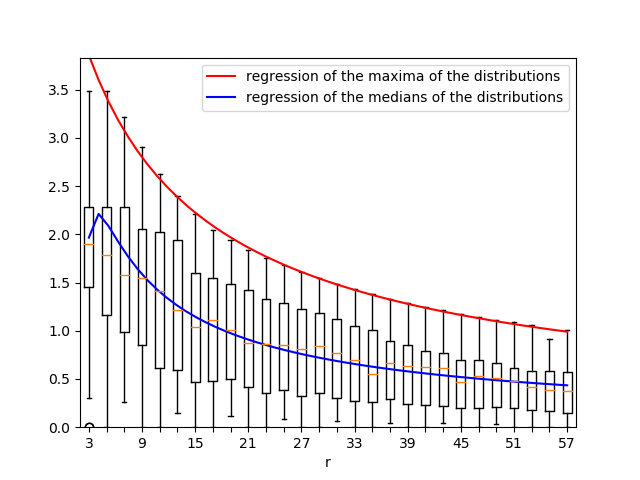}
         \caption{$n=6$}
         \label{fig:all6}
     \end{subfigure}
        \caption{Distribution of $(S_r(\M))_r$ for all manifolds with minimal triangulation of size from one (a) to six (f). Are shown the regressions of the maxima and the medians of the distributions according to Model~\ref{model:1}.}
\end{figure*}

\begin{figure*}[ht]
     \centering
     \begin{subfigure}[b]{0.49\textwidth}
         \centering
         \includegraphics[width=\columnwidth]{hyp_cv_9_0_942.png}
         \caption{$0.9427\ldots$}
         \label{fig:Sr094}
     \end{subfigure}
     \hfill
     \begin{subfigure}[b]{0.49\textwidth}
         \centering
         \includegraphics[width=\columnwidth]{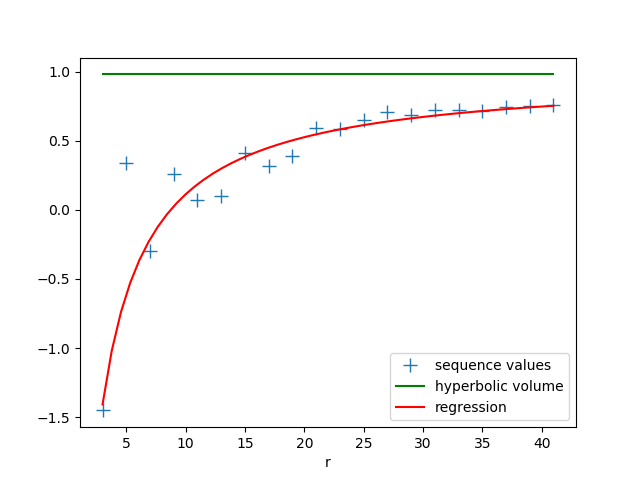}
         \caption{$0.9814\ldots$}
         \label{fig:b098}
     \end{subfigure}
     \hfill
     \begin{subfigure}[b]{0.49\textwidth}
         \centering
         \includegraphics[width=\columnwidth]{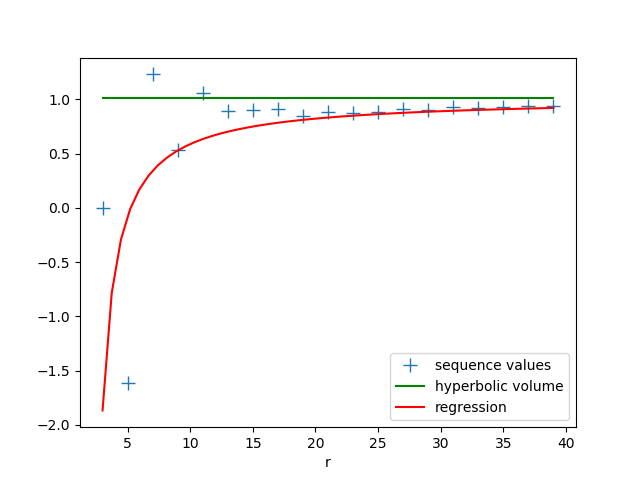}
         \caption{$1.0149\ldots$}
         \label{fig:b101}
     \end{subfigure}
     \hfill
     \begin{subfigure}[b]{0.49\textwidth}
         \centering
         \includegraphics[width=\columnwidth]{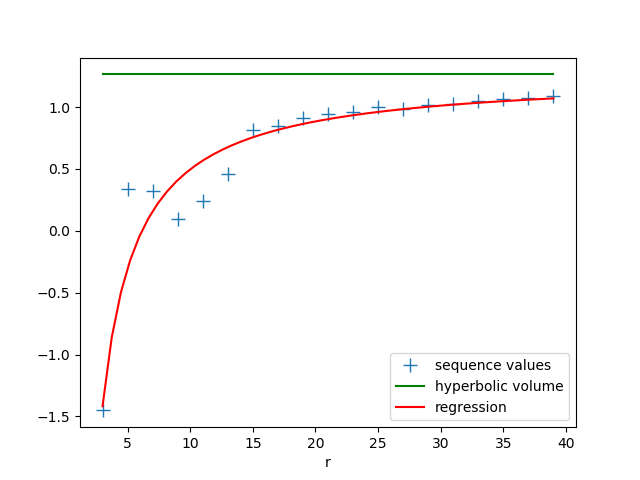}
         \caption{$1.2637\ldots$}
         \label{fig:b126}
     \end{subfigure}
        \caption{Graph of the sequences of Conjecture~\ref{conj:chentv} for all closed hyperbolic 3-manifolds admitting a nine tetrahedra triangulation. The regression with Model~\ref{model:2} is shown in red and the hyperbolic volumes are shown in green. The captions correspond to the hyperbolic volumes of the manifolds.}
\end{figure*}

\begin{figure*}[ht]
     \centering
     \begin{subfigure}[b]{0.49\textwidth}
         \centering
         \includegraphics[width=\columnwidth]{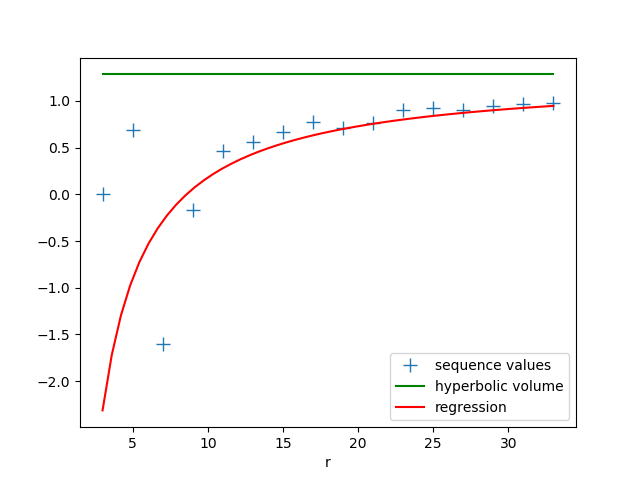}
         \caption{$1.2844\ldots$}
     \end{subfigure}
     \hfill
     \begin{subfigure}[b]{0.49\textwidth}
         \centering
         \includegraphics[width=\columnwidth]{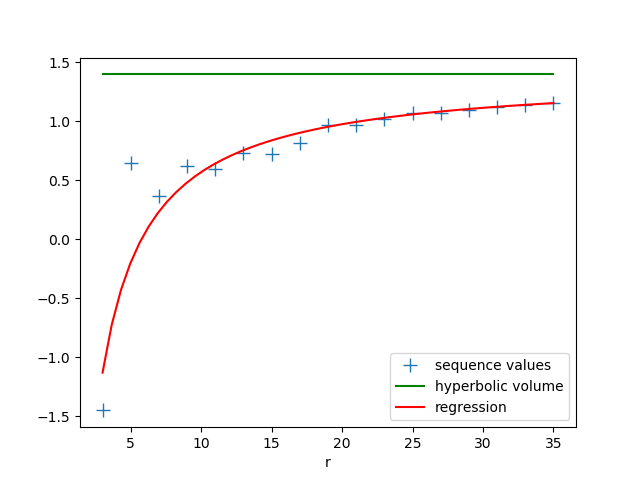}
         \caption{$1.3985\ldots$}
     \end{subfigure}
     \hfill
     \begin{subfigure}[b]{0.49\textwidth}
         \centering
         \includegraphics[width=\columnwidth]{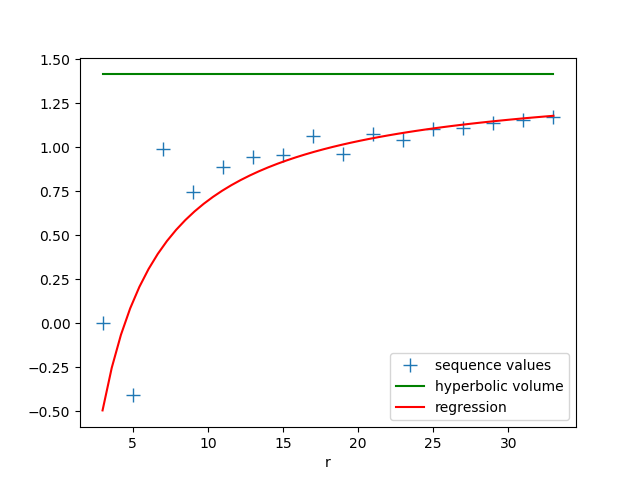}
         \caption{$1.4140\ldots,\ \Z_6$}
     \end{subfigure}
     \hfill
     \begin{subfigure}[b]{0.49\textwidth}
         \centering
         \includegraphics[width=\columnwidth]{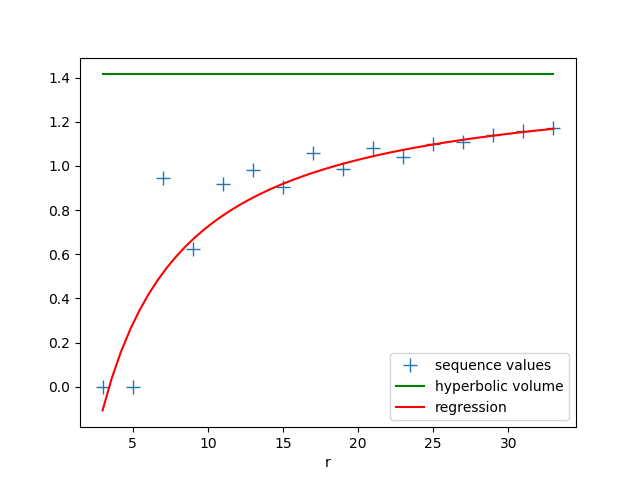}
         \caption{$1.4140\ldots,\ \Z_{10}$}
     \end{subfigure}
     \begin{subfigure}[b]{0.49\textwidth}
         \centering
         \includegraphics[width=\columnwidth]{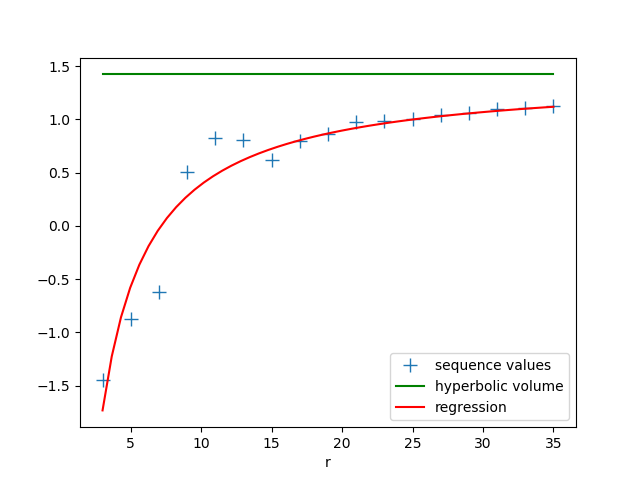}
         \caption{$1.4236\ldots$}
     \end{subfigure}
        \caption{Graph of the sequences of Conjecture~\ref{conj:chentv} for different closed hyperbolic 3-manifolds admitting a ten tetrahedra triangulation. The regression with Model~\ref{model:2} is shown in red and the hyperbolic volumes are shown in green. The captions correspond to the hyperbolic volumes of the manifolds, and occasionally followed by the first homology group, and by the length of the shortest closed geodesic to clarify the ambiguities.}
\end{figure*}

\begin{figure*}[ht]
     \centering
     \begin{subfigure}[b]{0.49\textwidth}
         \centering
         \includegraphics[width=\columnwidth]{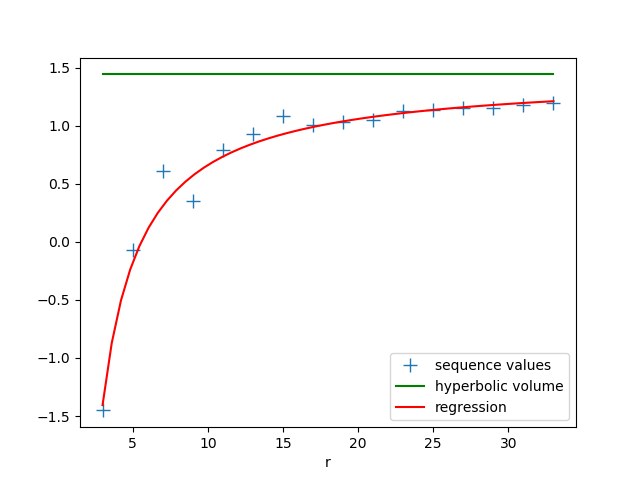}
         \caption{$1.4406\ldots$}
     \end{subfigure}
     \hfill
     \begin{subfigure}[b]{0.49\textwidth}
         \centering
         \includegraphics[width=\columnwidth]{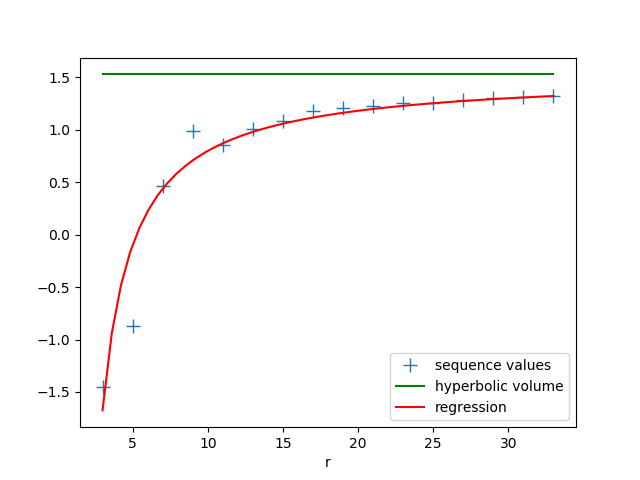}
         \caption{$1.5294\ldots$}
     \end{subfigure}
     \hfill
     \begin{subfigure}[b]{0.49\textwidth}
         \centering
         \includegraphics[width=\columnwidth]{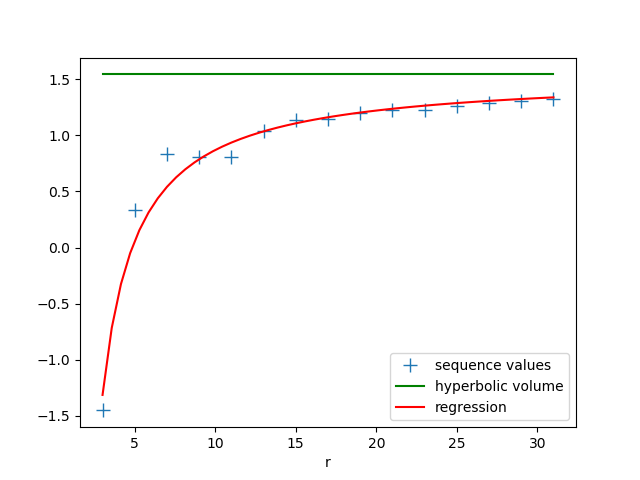}
         \caption{$1.5435\ldots,\ \Z_{35}$}
     \end{subfigure}
     \hfill
     \begin{subfigure}[b]{0.49\textwidth}
         \centering
         \includegraphics[width=\columnwidth]{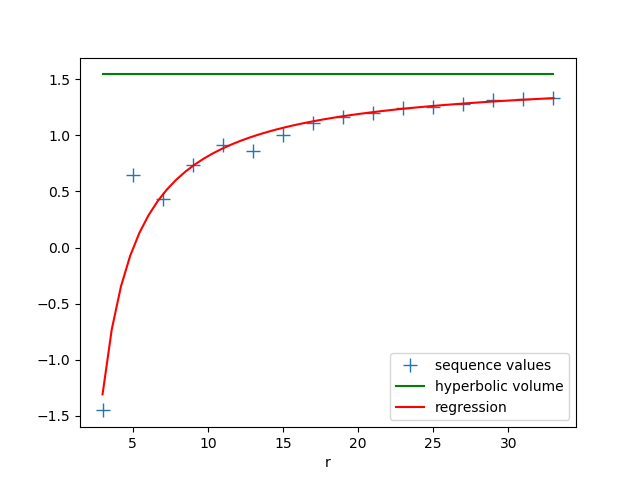}
         \caption{$1.5435\ldots,\ \Z_{21}$}
     \end{subfigure}
     \begin{subfigure}[b]{0.49\textwidth}
         \centering
         \includegraphics[width=\columnwidth]{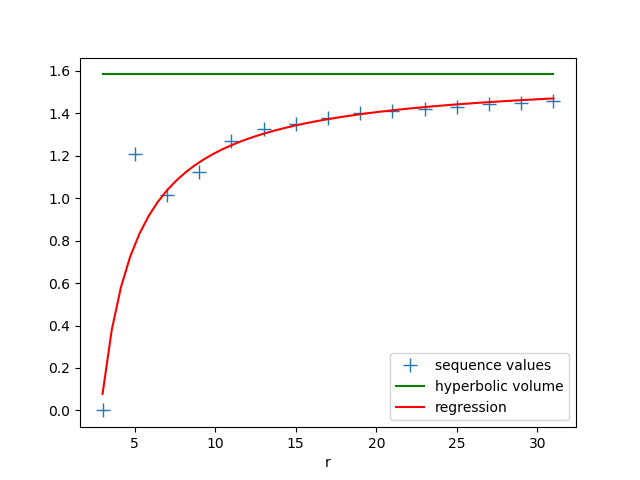}
         \caption{$1.5831\ldots,\ \Z_{40}$}
     \end{subfigure}
        \caption{Graph of the sequences of Conjecture~\ref{conj:chentv} for different closed hyperbolic 3-manifolds admitting a ten tetrahedra triangulation. The regression with Model~\ref{model:2} is shown in red and the hyperbolic volumes are shown in green. The captions correspond to the hyperbolic volumes of the manifolds, and occasionally followed by the first homology group, and by the length of the shortest closed geodesic to clarify the ambiguities.}
\end{figure*}

\begin{figure*}[ht]
     \centering
     \begin{subfigure}[b]{0.49\textwidth}
         \centering
         \includegraphics[width=\columnwidth]{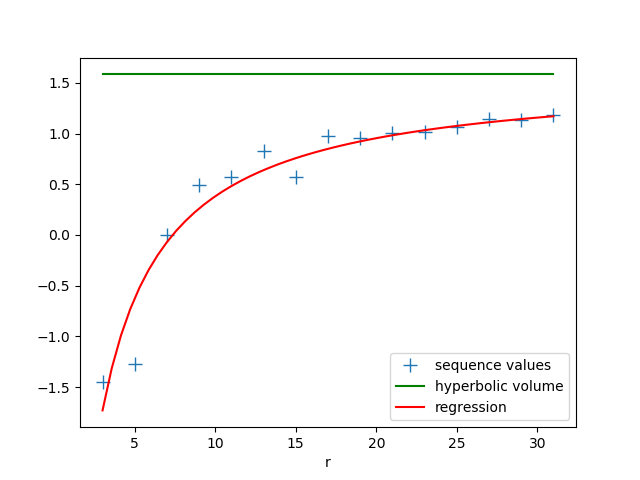}
         \caption{$1.5831\ldots,\ \Z_{21}$}
     \end{subfigure}
     \hfill
     \begin{subfigure}[b]{0.49\textwidth}
         \centering
         \includegraphics[width=\columnwidth]{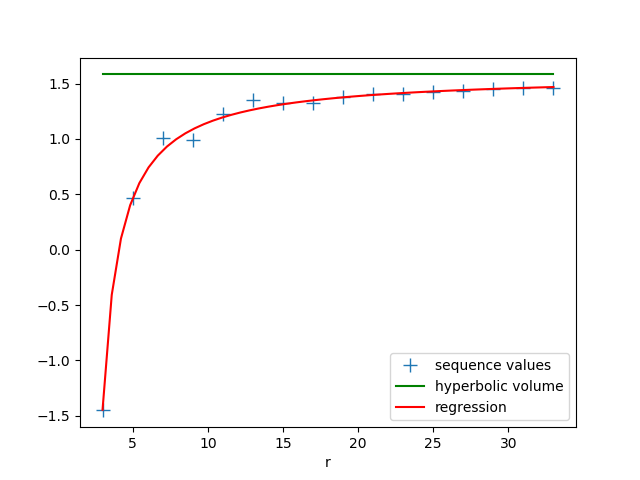}
         \caption{$1.5831\ldots,\ \Z_{3}+\Z_9$}
     \end{subfigure}
     \hfill
     \begin{subfigure}[b]{0.49\textwidth}
         \centering
         \includegraphics[width=\columnwidth]{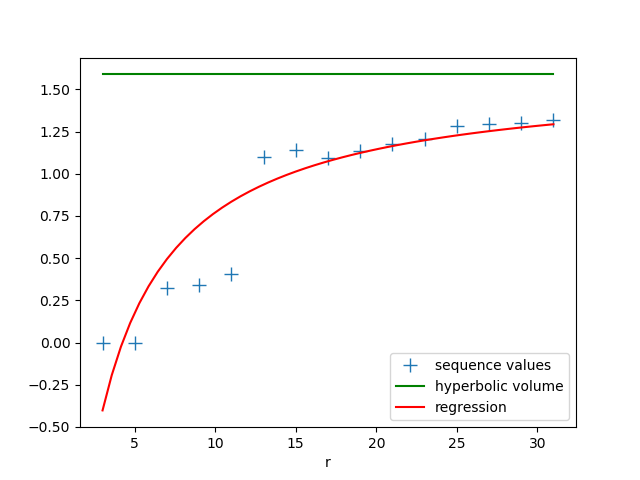}
         \caption{$1.5886\ldots,\ \Z_{30},\ 0.3046$}
     \end{subfigure}
     \hfill
     \begin{subfigure}[b]{0.49\textwidth}
         \centering
         \includegraphics[width=\columnwidth]{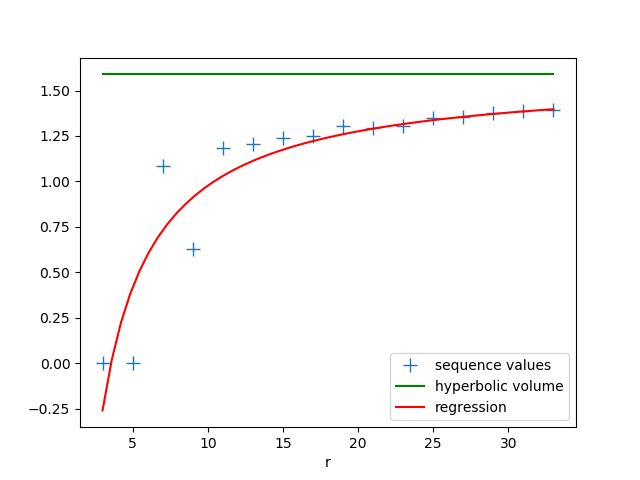}
         \caption{$1.5886\ldots,\ \Z_{30},\ 0.5345$}
     \end{subfigure}
     \begin{subfigure}[b]{0.49\textwidth}
         \centering
         \includegraphics[width=\columnwidth]{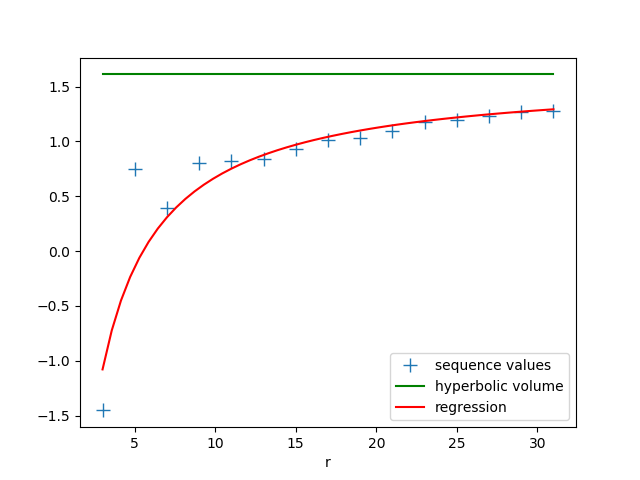}
         \caption{$1.6104\ldots$}
     \end{subfigure}
        \caption{Graph of the sequences of Conjecture~\ref{conj:chentv} for different closed hyperbolic 3-manifolds admitting a ten tetrahedra triangulation. The regression with Model~\ref{model:2} is shown in red and the hyperbolic volumes are shown in green. The captions correspond to the hyperbolic volumes of the manifolds, and occasionally followed by the first homology group, and by the length of the shortest closed geodesic to clarify the ambiguities.}
\end{figure*}

\begin{figure*}[ht]
     \centering
     \begin{subfigure}[b]{0.49\textwidth}
         \centering
         \includegraphics[width=\columnwidth]{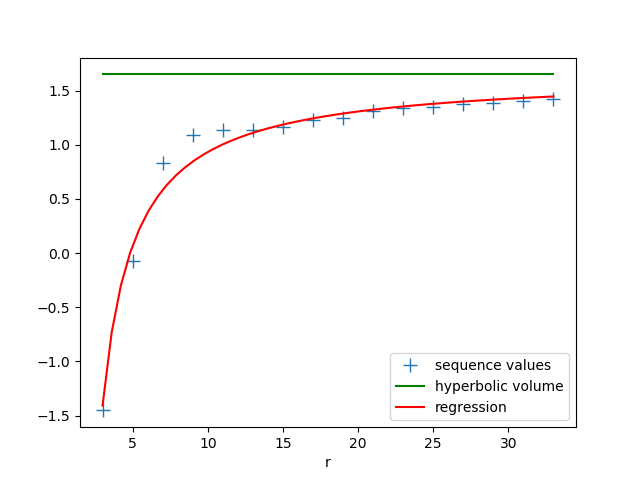}
         \caption{$1.6496\ldots,\ \Z_7$}
     \end{subfigure}
     \hfill
     \begin{subfigure}[b]{0.49\textwidth}
         \centering
         \includegraphics[width=\columnwidth]{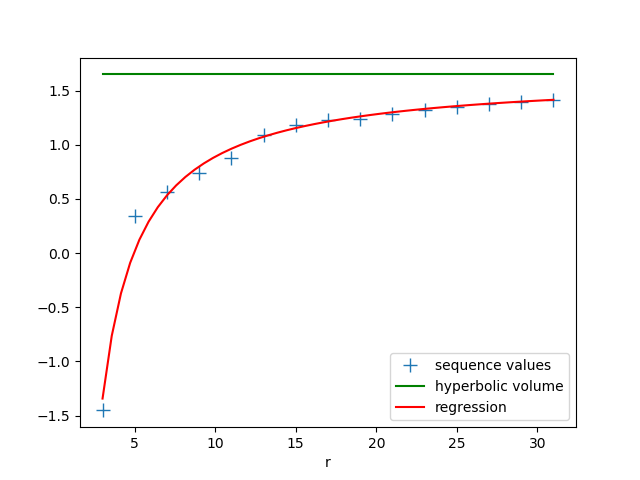}
         \caption{$1.6496\ldots,\ \Z_{15}$}
     \end{subfigure}
     \hfill
     \begin{subfigure}[b]{0.49\textwidth}
         \centering
         \includegraphics[width=\columnwidth]{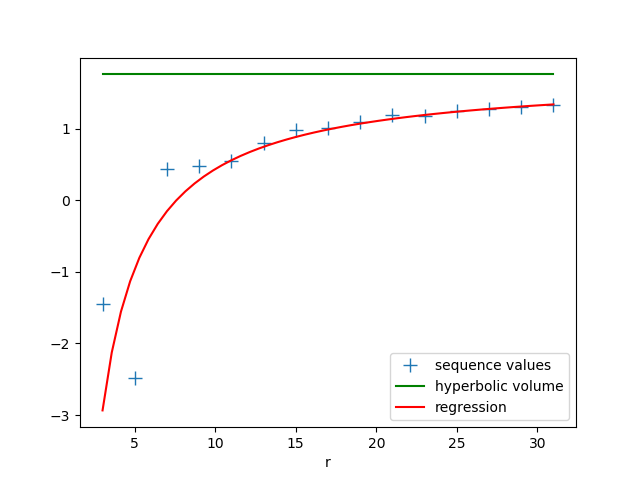}
         \caption{$1.7571\ldots$}
     \end{subfigure}
     \hfill
     \begin{subfigure}[b]{0.49\textwidth}
         \centering
         \includegraphics[width=\columnwidth]{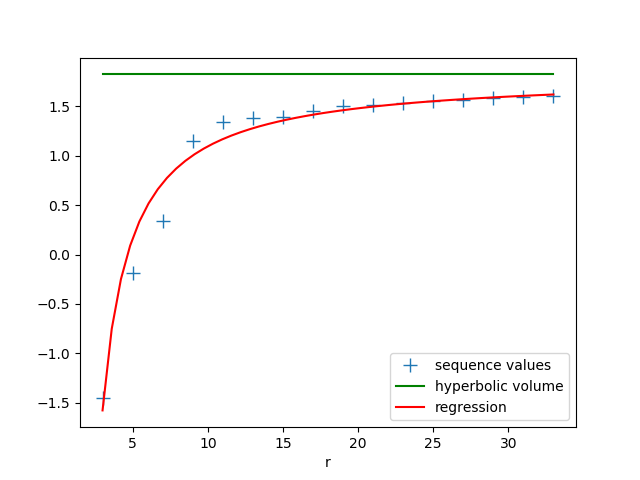}
         \caption{$1.8243\ldots$}
     \end{subfigure}
     \begin{subfigure}[b]{0.49\textwidth}
         \centering
         \includegraphics[width=\columnwidth]{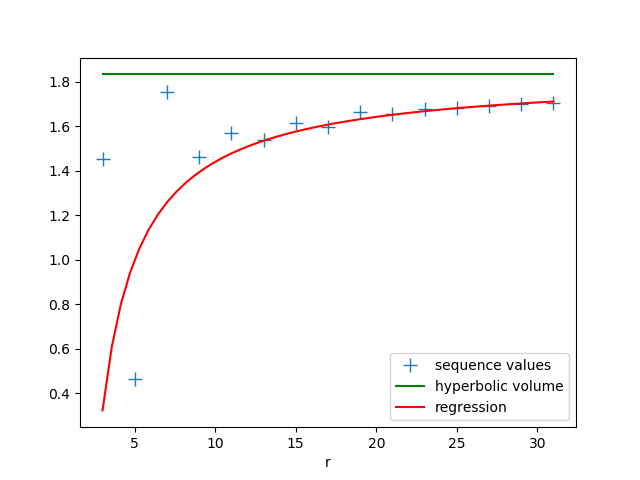}
         \caption{$1.8319\ldots$}
     \end{subfigure}
        \caption{Graph of the sequences of Conjecture~\ref{conj:chentv} for different closed hyperbolic 3-manifolds admitting a ten tetrahedra triangulation. The regression with Model~\ref{model:2} is shown in red and the hyperbolic volumes are shown in green. The captions correspond to the hyperbolic volumes of the manifolds, and occasionally followed by the first homology group, and by the length of the shortest closed geodesic to clarify the ambiguities.}
\end{figure*}

\begin{figure*}[ht]
     \centering
     \begin{subfigure}[b]{0.49\textwidth}
         \centering
         \includegraphics[width=\columnwidth]{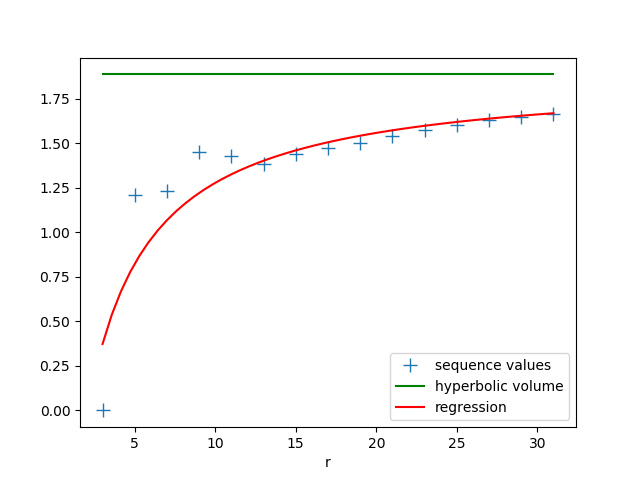}
         \caption{$1.8854\ldots,\ \Z_{40}$}
     \end{subfigure}
     \hfill
     \begin{subfigure}[b]{0.49\textwidth}
         \centering
         \includegraphics[width=\columnwidth]{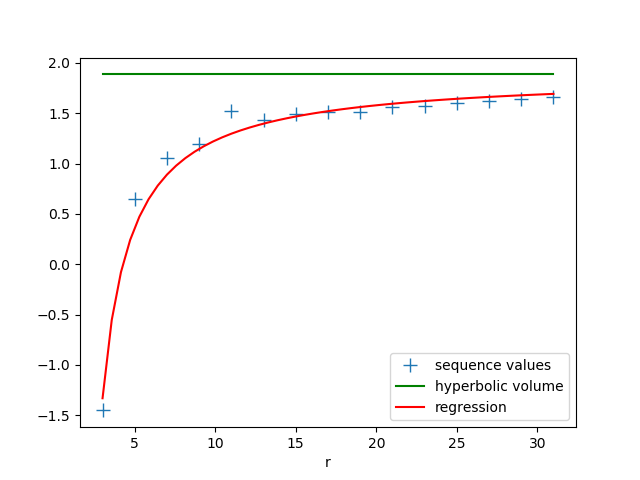}
         \caption{$1.8854\ldots,\ \Z_{7}+\Z_{7}$}
     \end{subfigure}
     \hfill
     \begin{subfigure}[b]{0.49\textwidth}
         \centering
         \includegraphics[width=\columnwidth]{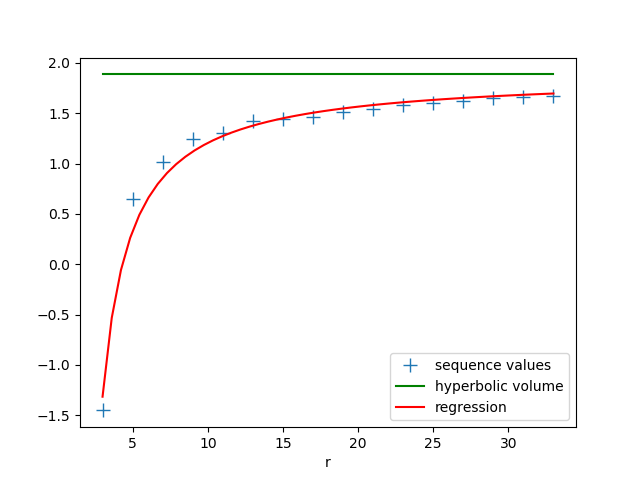}
         \caption{$1.8854\ldots,\ \Z_{39}$}
     \end{subfigure}
     \hfill
     \begin{subfigure}[b]{0.49\textwidth}
         \centering
         \includegraphics[width=\columnwidth]{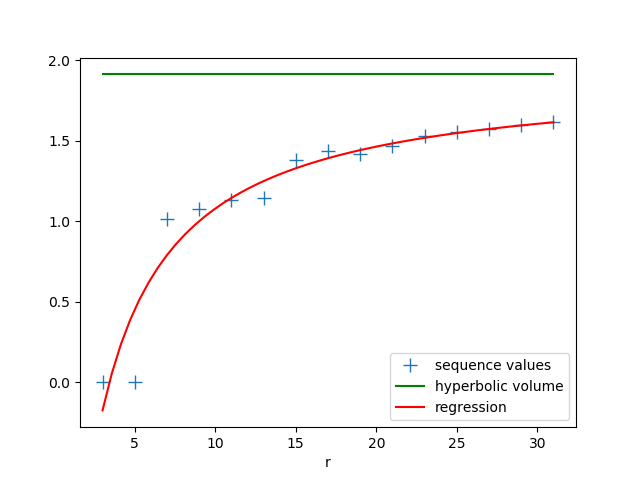}
         \caption{$1.9108\ldots$}
     \end{subfigure}
     \begin{subfigure}[b]{0.49\textwidth}
         \centering
         \includegraphics[width=\columnwidth]{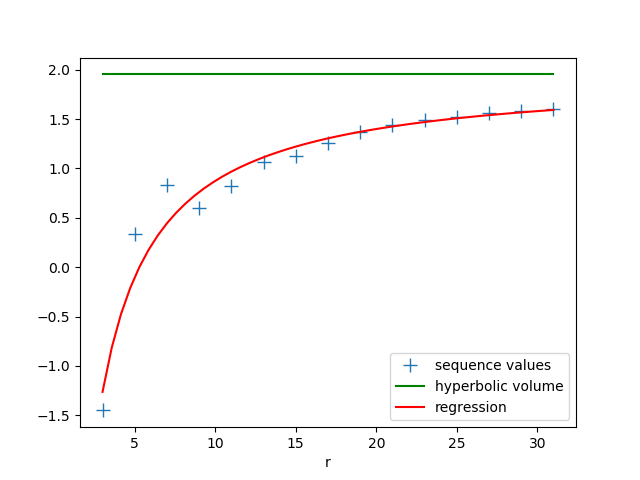}
         \caption{$1.9537\ldots$}
     \end{subfigure}
        \caption{Graph of the sequences of Conjecture~\ref{conj:chentv} for different closed hyperbolic 3-manifolds admitting a ten tetrahedra triangulation. The regression with Model~\ref{model:2} is shown in red and the hyperbolic volumes are shown in green. The captions correspond to the hyperbolic volumes of the manifolds, and occasionally followed by the first homology group, and by the length of the shortest closed geodesic to clarify the ambiguities.}
\end{figure*}

\end{document}